\documentclass[11pt]{article}

\usepackage[ruled,vlined,linesnumbered,hangingcomment]{algorithm2e}
\usepackage[colorlinks=true,linkcolor=blue,citecolor=red,urlcolor=green]{hyperref}
\usepackage[dvipsnames]{xcolor}
\usepackage{amsfonts,amssymb}
\usepackage{comment}
 
\usepackage{titlesec}
\usepackage{tikz}
\usepackage{bbding}
\usepackage{pifont}
\usepackage{titletoc}
\usepackage{hhline}
\usepackage{amsmath}
\usepackage{listings}
\usepackage{verbatim}
\usepackage{makecell}
\usepackage{tabu}
\usepackage{graphicx}
\usepackage{enumitem}
\setlist[itemize]{leftmargin=*}
\setlist[enumerate]{leftmargin=*}
\usepackage{mathrsfs}
\usepackage{setspace}
\usepackage{bbm}
\usepackage{mathtools}
\usepackage{titletoc}
\usepackage{geometry} 
\usepackage{threeparttable}
\usepackage{url} 
\usepackage{listings}
\usepackage{amssymb,amsmath,amsfonts,latexsym}
\usepackage{amsmath,graphicx,bm,xcolor,url}
\usepackage[caption=false]{subfig} 
\usepackage{array}%array and tabular environments
\usepackage{verbatim}
\usepackage{bm}
\usepackage{verbatim}
\usepackage{textcomp}
\usepackage{mathrsfs}
\usepackage{relsize}
\usepackage{subfig}
 \usepackage{amsthm}

\catcode`~=11 \def\UrlSpecials{\do\~{\kern -.15em\lower .7ex\hbox{~}\kern .04em}} \catcode`~=13 

\allowdisplaybreaks[3]

\newcommand{\calI}{\mathcal{I}}
\newcommand{\calJ}{\mathcal{J}}
\newcommand{\calK}{\mathcal{K}}

\newcommand{\calX}{\mathcal{X}}

\newcommand{\ba}{\mathbf{a}}
\newcommand{\bA}{\mathbf{A}}

\newcommand{\be}{\mathbf{e}}

\newcommand{\bI}{\mathbf{I}}

\newcommand{\bu}{\mathbf{u}}
\newcommand{\bU}{\mathbf{U}}
\newcommand{\bv}{\mathbf{v}}

\newcommand{\bx}{\mathbf{x}}
\newcommand{\bX}{\mathbf{X}}
\newcommand{\by}{\mathbf{y}}

\newcommand{\bz}{\mathbf{z}}
\newcommand{\bZ}{\mathbf{Z}}

\newcommand{\btau}{\bm{\tau}}

\newcommand{\bxi}{\bm{\xi}}

\DeclareMathOperator{\rank}{rank}

\newtheorem{theorem}{Theorem}

\newtheorem{assumption}{Assumption}
\newtheorem{corollary}{Corollary}

\newcommand{\qednew}{\nobreak \ifvmode \relax \else
      \ifdim\lastskip<1.5em \hskip-\lastskip
      \hskip1.5em plus0em minus0.5em \fi \nobreak
      \vrule height0.75em width0.5em depth0.25em\fi}

\usepackage[T1]{fontenc}
\usepackage[utf8]{inputenc}
 
\usepackage{amsthm}
\usepackage{xcolor, tikz}

\newtheorem{lem}{Lemma}
 
\newtheorem{rem}{Remark}

\usepackage{bm}
 
\DeclareMathOperator{\sign}{sign}

\newif\ifjrnotes
\jrnotestrue   % <-- set to \jrnotesfalse before final submission

\ifjrnotes
  \newcommand{\jrnote}[1]{%
    {\color{teal}[JR: #1]}%
  }
\else
  \newcommand{\jrnote}[1]{}
\fi

\title{Near-Optimal Lower Bounds on One-Bit Compressed Sensing of Approximately Sparse Signals}

\author{Junren Chen\thanks{Department of Statistics, Columbia University. (Email: \texttt{jc6315@columbia.edu})}\and Arya Mazumdar\thanks{Hal{\i}c{\i}o\u{g}lu Data Science Institute, UC San Diego. (Email: \texttt{arya@ucsd.edu})} \and Ming Yuan\thanks{Department of Statistics, Columbia University. (Email: \texttt{my2550@columbia.edu})}
}
\date{\today}

\begin{document}
\maketitle
 
\begin{abstract}
This paper provides the first near-optimal lower bounds  for one-bit compressed sensing of approximately sparse signals lying in a scaled $\ell_1$ ball, which is a commonly adopted relaxation of the exactly $k$-sparse assumption. In prior works, the best known upper bounds on uniform Euclidean error are of order $\widetilde{O}((k/m)^{1/3})$, where $m$ is the number of measurements. Under sub-Gaussian matrices, we establish nearly matching lower bounds for both the canonical one-bit compressed sensing model and the uniformly dithered model. Our argument is to first embed a small Euclidean ball into the signal set, which is straightforward for the dithered model but relies on a lifting map for the canonical model, and then construct two signals in this small ball that are separated in Euclidean distance by at least $(k/m)^{1/3}$ (up to logarithmic factor) but are indistinguishable from the binary measurements. Moreover,   our argument extends to approximately sparse signals that live in a properly scaled $\ell_q$ ball $(q\in [0,1])$, yielding a lower bound $\widetilde{\Omega}((k/m)^{\frac{2-q}{2+q}})$ that smoothly bridges the cases of exact sparsity ($q=0$) and $\ell_1$ sparsity ($q=1$). Finally, we discuss the extensions of our lower bounds to sub-Weibull matrices, adversarial bit flipping, matrix recovery, and characterize the transition to the non-sparse case. 
\end{abstract}
 
\section{Introduction}\label{sec:intro}
In one-bit compressed sensing (1bCS), one seeks to recover sparse vectors from the signs of linear measurements \cite{boufounos20081,jacques2013robust,plan2012robust,plan2013one}, that is, to recover some $\bx\in\Sigma^{n,*}_k:=\Sigma^n_k\cap \mathbb{S}^{n-1}$ from 
\begin{align}
    \by=\sign(\bA\bx)\label{1bcs}
\end{align}
and a known measurement matrix $\bA\in \mathbb{R}^{m\times n}$. Here, $\Sigma^n_k=\{\bu\in \mathbb{R}^n:\|\bu\|_0\le k\}$ denotes the cone of $k$-sparse vectors in $\mathbb{R}^n$, $\mathbb{S}^{n-1}=\{\bu\in \mathbb{R}^n:\|\bu\|_2=1\}$ denotes the  unit Euclidean sphere, and $\sign(a):=\mathbbm{1}(a\ge 0) - \mathbbm{1}(a<0)$ is applied element-wise to vectors. Note that it is standard to assume $\bx\in \mathbb{S}^{n-1}$ in 1bCS since the quantization loses all the information of signal norm. For arbitrary matrix $\bA$, any estimator $\hat{\bx}$ that is a measurable function of $(\bA,\by)$ admits a uniform recovery $\ell_2$ error lower bound
\begin{align}\label{1bcsexistlower}
    \sup_{\bx\in \Sigma^{n,*}_k}\|\hat{\bx}-\bx\|_2 \gtrsim \frac{k}{m}. 
\end{align} 
See \cite[Theorem 2]{jacques2013robust} and \cite[Theorem 6]{acharya2017improved}. 
Throughout this paper, we write $T_1 =\Omega(T_2)$ or $T_1\gtrsim T_2$ to denote $T_1\ge cT_2$ for some universal constant $c>0$.

Under Gaussian design $\bA$ that has i.i.d. $N(0,1)$ entries, it was shown \cite{matsumoto2024binary} that an efficient algorithm called normalized binary iterative hard thresholding (NBIHT) achieves the lower bound (\ref{1bcsexistlower}) up to logarithmic factors, i.e.,
\begin{align}\label{nbihtrate}
    \sup_{\bx\in \Sigma^{n,*}_k}\|\hat{\bx}_{{\rm nbiht}} - \bx\|_2 =\widetilde{O}\Big(\frac{k}{m}\Big),
\end{align}
where $\hat{\bx}_{{\rm nbiht}}$ denotes the output of NBIHT. (See also the earlier analysis in \cite{friedlander2021nbiht}.) By convention, $T_1=O(T_2)$ or $T_1\lesssim T_2$ denotes that $T_1\le CT_2$ for some universal constant $C>0$, and we use $\widetilde{O}(\cdot)$ and $\widetilde{\Omega}(\cdot)$ to hide logarithmic factors of $(m,n)$ in $O(\cdot)$ and $\Omega(\cdot),$ respectively. Throughout the paper, we use bold letters for vectors and matrices, regular letters for scalars.

However, signals in real applications are typically only approximately sparse rather than exactly living in $\Sigma^n_k$ for some $k$. For any $q>0$, let $\mathbb{B}_q^n=\{\bu\in \mathbb{R}^n:\|\bu\|_q:= (\sum_{i=1}^n|u_i|^q)^{1/q}\le 1\}$ denote the unit $\ell_q$ ball. The most common formulation of the set of approximately (or effectively) $k$-sparse signals in the unit Euclidean ball $\mathbb{B}_2^n$ is via
\begin{align*}
    \calK_{1,k} := \sqrt{k} \mathbb{B}_1^n \cap \mathbb{B}_2^n. 
\end{align*}
This is a relaxation of $\Sigma^n_k\cap \mathbb{B}_2^n$ in view of $\calK_{1,k}\supset\Sigma^n_k\cap \mathbb{B}_2^n$, and indeed, $\calK_{1,k}$ can be viewed as the convex hull of $\Sigma^n_k\cap \mathbb{B}_2^n$ \cite[Lemma 3.1]{plan2013one}. In the sequel, we may refer to the vectors in $\calK_{1,k}$ as $\ell_1$-sparse signals, implicitly for some sparsity level $k$.

Motivated by this consideration, previous works have established a number of recovery guarantees for 1bCS of $\bx\in \calK_{1,k}^*:= \sqrt{k}\mathbb{B}_1^n\cap \mathbb{S}^{n-1}$ \cite{plan2012robust,plan2013one,awasthi2016learning,chinot2022adaboost,chen2024optimal,dirksen2020one}. For simplicity we assume $k$ is known. The sharpest error upper bound  exhibits a non-standard decay rate of $O(m^{-1/3})$, precisely,
\begin{align} \label{l1sparserate}
    \sup_{\bx\in \calK_{1,k}^*} \|\hat{\bx} -\bx\|_2 =\widetilde{O}\Big(\big(\frac{k}{m}\big)^{1/3}\Big),
\end{align}
where $\hat{\bx}$ is some estimator constructed from $(\bA,\by)$. Under Gaussian matrix $\bA$, this error rate is achieved by constrained hamming distance minimization \cite{oymak2015near,chen2024optimal}
\begin{align} \label{hdm1bcs}
    \hat{\bx}_{{\rm hdm}} := \textrm{arg}\min_{\bu\in\calK_{1,k}^*}\, d_{\rm H}(\sign(\bA\bu),\by)
\end{align}
that is arguably the best possible decoder but is computationally intractable; here, we define the hamming distance between two binary vectors $\bu,\bv\in \{-1,1\}^m$ as $d_{\rm H}(\bu,\bv):=\sum_{i=1}^m \mathbbm{1}(u_i\ne v_i)$. Fortunately, there also exist two efficient algorithms, Adaboost \cite{chinot2022adaboost} and an $\ell_1$ projected gradient descent algorithm \cite{chen2024optimal},   achieving the error rate $\widetilde{O}((k/m)^{1/3})$. The main focus of this paper is on lower bound, hence we simply refer to \cite[Remark 6]{chen2024optimal} for the attainability of (\ref{l1sparserate}) via (\ref{hdm1bcs}) under $m=\widetilde{\Omega}(k)$.

As such, while relaxing the exact sparsity $\bx\in \Sigma^{n,*}_k$ to the $\ell_1$ sparsity $\calK_{1,k}^*$, the sharpest known upper bounds exhibit essential degradation, particularly from the provably optimal $\widetilde{O}(k/m)$ in (\ref{nbihtrate}) to $\widetilde{O}((k/m)^{1/3})$ in (\ref{l1sparserate}). To our best  knowledge, it remains an open question whether the upper bound for 1bCS of $\ell_1$ sparse vectors in (\ref{l1sparserate}) is information-theoretically optimal. Indeed, it was written \cite[Page 131]{chinot2022adaboost} that ``it is not clear whether the exponent in $O((k/m)^{1/3})$ is optimal, and further research about information theoretic lower bounds is needed.'' See also \cite[Remark 2]{chen2024optimal}. Before this work,  $\Omega(k/m)$ in  (\ref{nbihtrate}) appears to  be the only known lower bound for 1bCS. While this lower bound trivially applies to 1bCS of $\bx\in \calK_{1,k}^*$, it does not match the best known upper bound $\widetilde{O}((k/m)^{1/3})$.

As we emphasize, in the canonical model of 1bCS (\ref{1bcs}), all signal norm information is absorbed into the $\sign$ quantizer, hence there is no hope to recover the signal norm. To the end of signal norm recovery, it was proposed to use \emph{dithering}, a technique of adding ``artificial noise'' $\btau \in \mathbb{R}^m$---referred to as ``dither''---prior to the quantization:
\begin{align}
    \label{d1bcs}
    \by = \sign(\bA\bx + \btau). 
\end{align}
We refer to such a model with non-zero $\btau$ as dithered one-bit compressed sensing (D1bCS).
Unlike traditional noise, one can design  the generation of $\btau$ and has full access to the its entries for signal reconstruction. Under some dithering level $\lambda>0$, existing works studied (for instance) Gaussian dither $\btau\sim N(0,\lambda^2\bI_m)$ \cite{knudson2016one} and uniform dither $\btau\sim {\rm Unif}[-\lambda,\lambda]^m$ \cite{dirksen2021non,thrampoulidis2020generalized,xu2020quantized}.

Similar open question exists for D1bCS of $\ell_1$ sparse signals, that is, the recovery of $\bx\in \calK_{1,k}=\sqrt{k}\mathbb{B}_1^n\cap \mathbb{B}_2^n$ from $(\bA,\btau,\by)$ where $\by=\sign(\bA\bx+\btau)$.\footnote{In D1bCS, we no longer need to assume $\bx\in \mathbb{S}^{n-1}$. We assume $\bx\in\mathbb{B}_2^n$ but note that this can be easily generalized to $\bx\in \lambda_0\mathbb{B}_2^n$ for some $\lambda_0>0$.} For concreteness, we consider the setting with sub-Gaussian $\bA$ (i.e., the rows of $\bA$ are i.i.d. isotropic sub-Gaussian vectors) and uniform dither $\btau \sim {\rm Unif}[-\lambda,\lambda]^m$.\footnote{Although we often keep track of $\lambda$ in the error rates and sample complexities for D1bCS, the regime of main interest is when $\lambda\ge C$ for some large enough universal constant $C$ and $\lambda\le {\rm Polylog}\,n$, since most existing upper bounds hold in this regime and worsen if $\lambda$ is overly large \cite{dirksen2021non,thrampoulidis2020generalized,dirksen2023robust,chen2024optimal,jung2021quantized}.} Denote by $\hat{\bx}$ some decoder constructed from $(\bA,\btau,\by)$, then the best known upper bound is 
\begin{align}
    \label{d1bcsupper}
    \sup_{\bx\in \calK_{1,k}}\| \hat{\bx}-\bx\|_2 =\widetilde{O}\Big(\big(\frac{\lambda k}{m}\big)^{1/3}\Big).
\end{align}
This upper bound is attained by the computationally infeasible decoder of hamming distance minimization \cite{dirksen2021non,chen2024optimal} 
\begin{align}\label{hdmd1bcs}
    \hat{\bx}_{{\rm hdm}} := \textrm{arg}\min_{\bu\in \calK_{1,k}}\,d_{\rm H}\big(\sign(\bA\bu+\btau),\by\big),
\end{align}
as well as by two efficient algorithms of convex program \cite{jung2021quantized} and projected gradient descent \cite{chen2024optimal}. Particularly, (\ref{d1bcsupper}) is achieved by (\ref{hdmd1bcs}) under $m=\widetilde{\Omega}(\lambda^{-2}k)$; see, e.g., \cite[Remark 6]{chen2024optimal}.
To our best knowledge, it remains open whether the upper bound (\ref{d1bcsupper}) is tight.

The prior works most relevant to the open questions (of the tightness of $(k/m)^{1/3}$) are \cite{dirksen2022sharp} and \cite{dirksen2026resolution}, which proved a lower bound $\widetilde{\Omega}((k/m)^{1/3})$ for \emph{uniform hyperplane tessellation}. In particular, the result of Dirksen, Mendelson and Stollenwerk \cite[Theorem 1.10]{dirksen2022sharp} has the following implication: under Gaussian $\bA$ and uniform dither $\btau\sim {\rm Unif}[-\lambda,\lambda]^m$ with large enough $\lambda$, with high probability it holds that 
\begin{align}\label{d1bcsteselower}
    \sup_{\bu,\bv\in \calK_{1,k}}\bigg|\frac{\sqrt{2\pi}\lambda}{m}d_{\rm H}\big(\sign(\bA\bu+\btau),\sign(\bA\bv+\btau)\big)-\|\bu-\bv\|_2\bigg| \gtrsim \Big(\frac{\lambda k}{m}\Big)^{1/3}.
\end{align}
Yet this does not imply the sharpness of the uniform recovery upper bound in (\ref{d1bcsupper}) for D1bCS. In fact, the argument of \cite{dirksen2022sharp} is to construct, via Dvoretzky--Milman theorem (e.g., \cite{ArtsteinAvidanGiannopoulosMilman2015}), $\bu^*\in \calK_{1,k}$ such that 
\begin{align} \label{dirksenconstruct}
    \|\bu^*\|_2\lesssim \Big(\frac{\lambda k}{m}\Big)^{1/3},\quad  \frac{\lambda}{m}d_{\rm H}\big(\sign(\bA\bu^*+\btau),\sign(\btau)\big)\gtrsim \Big(\frac{\lambda k}{m}\Big)^{1/3}
\end{align}
up to logarithmic factors, and then set $(\bu,\bv)=(\bu^*,\bm{0})$ on the left-hand side of (\ref{d1bcsteselower}) to yield the lower bound. However, the large hamming distance in such construction indeed helps the recovery of $\bu^*$. More recently, Dirksen and Strachan \cite{dirksen2026resolution} adapted this argument to 1bCS via a ``lifting'' trick (see also \cite{plan2014dimension}) and showed that
\begin{align}\label{1bcstesselower}
    \sup_{\bu,\bv\in \calK_{1,k}^*}\bigg|\frac{1}{m}d_{\rm H}\big(\sign(\bA\bu),\sign(\bA\bv)\big) -d_{\mathbb{S}^{n-1}}(\bu,\bv)\bigg| \gtrsim \Big(\frac{k}{m}\Big)^{1/3} 
\end{align}
holds with high probability on the Gaussian $\bA$, where $d_{\mathbb{S}^{n-1}}(\bu,\bv):=\pi^{-1}\arccos(\langle \bu,\bv\rangle)$ denotes the geodesic distance over $\mathbb{S}^{n-1}$. But this sheds no light on the tightness of (\ref{l1sparserate}); indeed, like \cite{dirksen2022sharp}, the construction of the pair of $(\bu,\bv)$ achieving the lower bound in (\ref{1bcstesselower}) is in a ``large-hamming-distance'' direction that benefits the recovery.

This paper resolves the two open questions and establishes near-optimal lower bounds for (D)1bCS of approximately sparse signals. Our new lower bounds match  the upper bounds in (\ref{l1sparserate}) and (\ref{d1bcsupper}), up to logarithmic factors, and show that (with high probability) no decoder can achieve a substantially smaller uniform recovery error. Like \cite{dirksen2022sharp,dirksen2026resolution}, we first resolve D1bCS and then transfer the developments to 1bCS without dithering (\ref{1bcs}) via a lifting trick (see Lemma \ref{lem:lifting}). However, while our arguments are also constructive in nature, the construction is toward an entirely  converse direction that renders recovery impossible. Specifically, for the dithered model $\by=\sign(\bA\bx+\btau)$, we construct $\bu^*\in\calK_{1,k}$ such that 
\begin{align}
    \|\bu^*\|_2\gtrsim \Big(\frac{\lambda k}{m}\Big)^{1/3},\quad \sign(\bA\bu^*+\btau) = \sign(\btau)\label{ourconstruct}
\end{align}
up to logarithmic factors. This is in stark contrast to (\ref{dirksenconstruct}) and ensures that $\bu^*$ and $\bm{0}$---which are  $\widetilde{\Omega}((\lambda k/m)^{1/3})$-separated in Euclidean distance---are indistinguishable, yielding the matching lower bound for (\ref{d1bcsupper}).

At first glance, it appears difficult to construct $\bu^*$ that satisfies (\ref{ourconstruct}), especially in view of the exact consistency \[\sign(\bA\bu^*+\btau)=\sign(\btau)\] that requires considerations on all $m$ measurements. (In contrast, to render $\frac{\lambda}{m}d_{\rm H}(\sign(\bA\bu^*+\btau),\sign(\btau))\gtrsim (\frac{\lambda k}{m})^{1/3}$, the authors of \cite{dirksen2022sharp} could simply focus on a subset of measurements with small $\tau_i$.) However, our construction is even conceptually simpler than \cite{dirksen2022sharp} (see Lemma \ref{lem:construct}) and readily applies to more general contexts such as recovery of $\ell_q$-sparse signals, sub-Weibull designs and low-rank matrix sensing (see Section \ref{sec:extension}). Moreover, our results reproduce (\ref{d1bcsteselower}) from \cite{dirksen2022sharp} and (\ref{1bcstesselower}) from \cite{dirksen2026resolution}, up to logarithmic factors; see Remarks \ref{rem:reproduce1}, \ref{rem:reproduce2}.

The rest of this paper is organized as follows. In Section \ref{sec:mainresults} we present our main results and put them in perspective. Section \ref{sec:proof} provides the complete proofs to our results. We then discuss multiple extensions of our results in Section \ref{sec:extension} and conclude the paper in Section \ref{sec:conclusion}.

%We note in passing that the results of \cite{dirksen2022sharp,dirksen2026resolution} apply to general sets.%, we shall specialize to approximately sparse signals here. 

\section{Main Results}\label{sec:mainresults}
Our main results are the matching lower bounds for (\ref{d1bcsupper}) and (\ref{l1sparserate}).

\subsection{Dithered One-Bit Compressed Sensing}
We start with the dithered model $\by=\sign(\bA\bx+\btau)$. We first assume the sub-Gaussianity of the sensing vectors. %without requiring that they are independent.
\begin{assumption} \label{assump1}
For some $L>0$, the rows of $\bA$---denoted by $\ba_1,\cdots,\ba_m$---are independent and $L$-sub-Gaussian: for any $i\in[m]:=\{1,2,\cdots,m\}$, $\bu\in \mathbb{S}^{n-1}$, and $t>0$, 
\begin{align} 
\label{subgauss1}
\mathbb{P}\big(|\ba_i^\top  \bu|\ge t\big) \le 2\exp\Big(-\frac{t^2}{2L^2}\Big). 
\end{align}
\end{assumption}

We then make the following assumption on the dither. For instance, (\ref{smallball}) is satisfied if $\tau_i$'s are continuous random variables with density function uniformly bounded by $\rho_\tau$. 

\begin{assumption}\label{assump2}
$\btau$ is independent of $\bA$,
  and for some $\rho_{\tau}>0$, the entries of $\btau$---denoted by $\tau_1,\cdots,\tau_m$---are independent and satisfy a small-ball probability condition: for any $i\in [m]$ and $t>0$, 
    \begin{align}\label{smallball}
        \mathbb{P}\big(|\tau_i|\le t\big)\le 2\rho_{\tau}t.
    \end{align}
\end{assumption}

Throughout this paper, we use $C_1,C_2,c_1,c_2,\cdots$ to denote universal constants whose values can vary from line to line. We   use $\lfloor a \rfloor$ to denote the largest integer not greater than $a$. We also define $a\wedge b:=\min\{a,b\}$ and $a\vee b: = \max\{a,b\}$.

Our first main result concerns the lower bound on the uniform recovery of $\bx\in \calK_{1,k}$ from $\by=\sign(\bA\bx+\btau)$. 
\begin{theorem}[Lower bound for D1bCS over $\calK_{1,k}$] \label{thm:d1bcslower}
    Assume that Assumptions \ref{assump1}--\ref{assump2} hold, and assume $k\ge 2$. For any $\sqrt{\frac{k}{m\wedge n}}\le r\le 1$, if
    \begin{align}\label{thm1con}
        \frac{k}{m\sqrt{\log m}}\ge 32\rho_{\tau}r^3L,  
    \end{align}
then    
    with probability at least $1-\frac{2}{m}-\exp(-\frac{k}{24})$, any decoder $\hat{\bx}$ that is a measurable function of $(\bA,\btau,\by=\sign(\bA\bx+\btau))$ satisfies 
    \begin{align}\label{thm1d1bcsconclu}
        \sup_{\bx\in \calK_{1,k}}\|\hat{\bx}-\bx\|_2 \ge \frac{r}{2}. 
    \end{align}
\end{theorem}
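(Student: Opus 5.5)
The plan is to build a nonzero signal $\bu^*\in\calK_{1,k}$ with $\|\bu^*\|_2\ge r$ and $\sign(\bA\bu^*+\btau)=\sign(\btau)$. Once we have it, $\bx=\bu^*$ and $\bx=\bm{0}$ produce identical data $(\bA,\btau,\by)$, so any decoder outputs the same $\hat{\bx}$ in both cases. The triangle inequality then gives $\max\{\|\hat{\bx}-\bu^*\|_2,\|\hat{\bx}\|_2\}\ge \|\bu^*\|_2/2\ge r/2$, which is (\ref{thm1d1bcsconclu}).

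It suffices to find $\bu^*$ with $|\ba_i^\top\bu^*|<|\tau_i|$ for all $i\in[m]$, since then no sign flips. I split this into two events, one on $\btau$ and one on $\bA$.

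\textbf{Step 1 (dither).} By (\ref{smallball}) and a union bound, $\mathbb{P}(\min_i|\tau_i|\le t)\le 2\rho_\tau m t$. Choosing $t_0=1/(\rho_\tau m^2)$ gives $\min_i|\tau_i|>t_0$ with probability at least $1-2/m$.

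\textbf{Step 2 (signal).} I then need $\bu^*\in\calK_{1,k}$ with $\|\bu^*\|_2=r$ and $\max_i|\ba_i^\top\bu^*|\le t_0$. A trivial $\ell_\infty$ bound only gives $\max_i|\ba_i^\top\bu|\approx rL\sqrt{\log m}$, which is far too large. Instead I work in a small Euclidean ball embedded in $\calK_{1,k}$: for any $\bu$ with $\|\bu\|_2\le r$ we have $\|\bu\|_1\le\sqrt n\,r$. With $r\ge\sqrt{k/n}$ this is not automatically $\le\sqrt k$, so I restrict the support to $s$ coordinates with $s=\lfloor k/r^2\rfloor$. Then $\|\bu\|_1\le\sqrt s\,r\le\sqrt k$, and $s\le m\wedge n$ holds because $r\ge\sqrt{k/(m\wedge n)}$; the hypothesis $k\ge 2$ keeps $s\ge 2$. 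Every vector in $r\mathbb{B}_2^n$ supported on a fixed set $S$ with $|S|=s$ therefore lies in $\calK_{1,k}$.

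Within that $s$-dimensional subspace I would not ask for an exact annihilation of all $m$ measurements, since $m>s$ in general. I would instead pick a direction that makes all $|\ba_i^\top\bu|$ small at once. The cleanest route is linear algebra on the submatrix $\bA_S\in\mathbb{R}^{m\times s}$. I would take $\bu^*$ along a bottom singular direction, or more robustly use the budget condition (\ref{thm1con}). Read as $r^3\lesssim k/(\rho_\tau L m\sqrt{\log m})$, it has the shape of an averaging argument: the typical size of $|\ba_i^\top\bu|$ on $S$ is about $Lr$, and a random sign or choice within the $s$-dimensional ball lets us trade dimension for smallness. Concretely, I expect the construction (presumably the paper's Lemma \ref{lem:construct}) to give $\bu^*$ with $|\ba_i^\top\bu^*|\lesssim L r\sqrt{\log m}\cdot(\text{fraction})$ on enough coordinates. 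For the remaining coordinates, the small-ball property of $\tau_i$, applied with the threshold $t\asymp Lr^3 m\sqrt{\log m}/k$, should be what makes $\rho_\tau\cdot(\text{threshold})\cdot(\#\text{coords})\le 1/m$-type bounds work. The $\exp(-k/24)$ term suggests a concentration step over $\asymp k$ independent pieces, such as chi-square concentration of a norm over a block, used to guarantee $\|\bu^*\|_2\ge r$ after normalization.

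\textbf{Main obstacle.} The hard part is Step 2: producing $\bu^*$ whose inner products with \emph{every} row are dominated by the corresponding $|\tau_i|$, using only the sub-Gaussian tail and the $\sqrt{\log m}$ union bound. I would first try conditioning on $\btau$ and building $\bu^*$ greedily or by a Gram–Schmidt-type orthogonalization against the rows with the smallest $|\tau_i|$. Only about $s$ such rows can be made exactly orthogonal; for the rest, use $|\ba_i^\top\bu^*|\le Lr\sqrt{2\log m}$ with high probability, together with the count of $i$ having $|\tau_i|\le Lr\sqrt{2\log m}$, which is at most about $\rho_\tau L r m\sqrt{\log m}\lesssim k/r^2\asymp s$ by (\ref{thm1con}).

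So the plan is to pick the $\lesssim s/2$ "dangerous" rows with small $|\tau_i|$, choose $\bu^*$ supported on $S$ and orthogonal to their restrictions to $S$ (possible since $s>s/2$), and handle all other rows by the sub-Gaussian union bound. Because $\bu^*$ depends on $\bA$, that union bound must hold uniformly over the $\lesssim s$-dimensional subspace, which is where a net or chi-square argument enters. Making this uniformity step hold within the stated constants is the step I expect to be delicate.
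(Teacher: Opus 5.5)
Your closing plan (restrict to $s=\lfloor k/r^2\rfloor$ coordinates, make $\bu^*$ orthogonal to the rows with small $|\tau_i|$, and handle all other rows by a sub-Gaussian union bound) is the paper's route via Lemma~\ref{lem:construct}. However, you leave open exactly the step you flag as delicate, and the remedy you propose would fail. A net or uniform bound over the kernel cannot work. Let $V\subset\mathbb{R}^s$ have dimension at least $s/2$, let $P_V$ be the orthogonal projection onto $V$, and let $\ba_{i,S}$ be the restriction of $\ba_i$ to $S$. Then $\sup_{\bu\in V\cap\mathbb{S}^{s-1}}|\ba_{i,S}^\top\bu|=\|P_V\ba_{i,S}\|_2$, which for Gaussian rows is of order $\sqrt{s}$, not $\sqrt{\log m}$. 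The threshold defining the dangerous rows would then have to be $\asymp rL\sqrt{s}$. Keeping their number below $s/2$ via (\ref{smallball}) would require $\rho_\tau rL\sqrt{s}\,m\lesssim s$, which (\ref{thm1con}) does not give. You would only obtain $r\lesssim (k/(\rho_\tau Lm)^2)^{1/4}$ and lose the $1/3$ exponent.

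The missing idea is independence, and it makes the step trivial.
\begin{itemize}
\item The dangerous set $\calI=\{i:|\tau_i|\le 2rL\sqrt{\log m}\}$ is a function of $\btau$ alone, and $\btau$ is independent of $\bA$, whose rows are independent.
\item Conditionally on $\btau$, take $\bu^*$ to be a unit vector in ${\rm Ker}(\bA^{\calI}_S)$ chosen as a function of $\{\ba_i\}_{i\in\calI}$ only. A bottom singular vector of all of $\bA_S$ would not do, since it depends on every row.
\item Then $\bu^*$ is independent of $\{\ba_i\}_{i\in\calI^c}$, so each $\ba_{i,S}^\top\bu^*$ obeys the fixed-direction tail (\ref{subgauss1}). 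At level $2L\sqrt{\log m}$ this gives failure probability $2/m^2$ per row and $2/m$ overall. At your level $L\sqrt{2\log m}$ the per-row bound is $2/m$, so the union bound is vacuous.
\item For $i\in\calI$ the inner product is $0<|\tau_i|$ almost surely.
\end{itemize}
You also need $|\calI|\le s/2$. By (\ref{smallball}), (\ref{thm1con}) and $s\ge k/(2r^2)$, the events $\{i\in\calI\}$ are independent, each with probability at most $4\rho_\tau rL\sqrt{\log m}\le s/(4m)$. A Chernoff bound then gives $\mathbb{P}(|\calI|>s/2)\le\exp(-s/12)\le\exp(-k/24)$. This Chernoff bound is the source of the $\exp(-k/24)$ term. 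No norm concentration is needed, since $\|r\bu^*\|_2=r$ exactly by construction.

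Finally, discard your Step~1. It asks for $|\ba_i^\top\bu^*|\le 1/(\rho_\tau m^2)$ on all $m>s$ rows, which is infeasible.
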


We now specialize Theorem \ref{thm:d1bcslower} to the   setting with uniform dither $\tau \sim {\rm Unif}[-\lambda,\lambda]^m$, where the upper bound in (\ref{d1bcsupper}) is achieved over rather general  sub-Gaussian matrices $\bA$ \cite{dirksen2021non,jung2021quantized,chen2024optimal}. %With some algebra this immediately yields the following statement, which indicates the sharpness of (\ref{d1bcsupper}). 

\begin{corollary} \label{cor:d1bcssharp}
    Assume that Assumption \ref{assump1} holds,      $\btau\sim {\rm Unif}[-\lambda,\lambda]^m$  is independent of $\bA$, $k\ge 2$, 
    \begin{align}\label{cond1bcs}
        \lambda\ge \frac{16Lm\sqrt{k\log m}}{(m\wedge n)^{3/2}}\qquad \textrm{and}\qquad 16Lm\sqrt{\log m}\ge \lambda k.
    \end{align}
    Then with probability at least $1-\frac{2}{m}-\exp(-\frac{k}{24})$, any decoder $\hat{\bx}$ that is a measurable function of $(\bA,\btau,\by=\sign(\bA\bx+\btau))$ satisfies 
    \begin{align*}
        \sup_{\bx\in \calK_{1,k}}\|\hat{\bx}-\bx\|_2 \ge \frac{1}{4}\bigg(\frac{\lambda k}{2Lm\sqrt{\log m}}\bigg)^{1/3}. 
    \end{align*}
\end{corollary}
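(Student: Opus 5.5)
The plan is to obtain Corollary \ref{cor:d1bcssharp} as a direct specialization of Theorem \ref{thm:d1bcslower}. We only need to check Assumption \ref{assump2}, pick $r$, and verify the theorem's hypotheses.

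First, for $\tau_i\sim{\rm Unif}[-\lambda,\lambda]$ independent of $\bA$, the density is $\frac{1}{2\lambda}$. Hence $\mathbb{P}(|\tau_i|\le t)\le t/\lambda = 2\rho_\tau t$ with $\rho_\tau=\frac{1}{2\lambda}$. Independence of the entries holds by assumption, so Assumption \ref{assump2} is satisfied.

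Next we choose $r$ so that (\ref{thm1con}) holds with equality. With $\rho_\tau=1/(2\lambda)$, condition (\ref{thm1con}) reads $\frac{k}{m\sqrt{\log m}}\ge \frac{16 r^3L}{\lambda}$, i.e. $r^3\le \frac{\lambda k}{16Lm\sqrt{\log m}}$. We therefore set
\[
r:=\Big(\frac{\lambda k}{16Lm\sqrt{\log m}}\Big)^{1/3}=\frac{1}{2}\Big(\frac{\lambda k}{2Lm\sqrt{\log m}}\Big)^{1/3},
\]
using $16=2\cdot 2^3$. Then $r/2$ is exactly the claimed bound $\frac14(\cdot)^{1/3}$.

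It remains to check that this $r$ lies in the admissible range $\sqrt{k/(m\wedge n)}\le r\le 1$. The upper constraint $r\le1$ is equivalent to $\lambda k\le 16Lm\sqrt{\log m}$, which is the second condition in (\ref{cond1bcs}). The lower constraint $r^3\ge (k/(m\wedge n))^{3/2}$ is equivalent to
\[
\lambda\ge \frac{16Lm\sqrt{\log m}\,k^{3/2}}{k\,(m\wedge n)^{3/2}}=\frac{16Lm\sqrt{k\log m}}{(m\wedge n)^{3/2}},
\]
which is the first condition in (\ref{cond1bcs}). Invoking Theorem \ref{thm:d1bcslower} (with $k\ge2$) then gives the conclusion with the same probability $1-\frac{2}{m}-\exp(-\frac{k}{24})$.

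There is no real obstacle here. The only care needed is the constant bookkeeping, namely that $\rho_\tau=1/(2\lambda)$ and that the factor $16$ splits as $2\cdot 2^3$, so that both the range conditions on $r$ and the final constant $\frac14$ match (\ref{cond1bcs}) exactly.
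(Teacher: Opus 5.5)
Your proposal is correct and follows the paper's proof essentially verbatim. You identify $\rho_\tau=\frac{1}{2\lambda}$, take $r=\big(\frac{\lambda k}{16Lm\sqrt{\log m}}\big)^{1/3}$ so that (\ref{thm1con}) holds, and check that the two conditions in (\ref{cond1bcs}) are exactly $r\ge\sqrt{k/(m\wedge n)}$ and $r\le 1$. Your explicit algebra for these range conditions, and for the constant via $16=2\cdot 2^3$, simply spells out the step the paper leaves as ``easy to verify.''
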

\begin{rem}\label{rem:d1bcsbenigh}
   The above statement indicates that the upper bound in (\ref{d1bcsupper}) is sharp, up to logarithmic factors (and the dependence on $L$ that is not made explicit in (\ref{d1bcsupper})). We point out that the conditions in (\ref{cond1bcs}) and not restricted. In fact, ignoring $L$ and the logarithmic factor $\sqrt{\log m}$, the two conditions become 
   \begin{align}
       \lambda \gtrsim \frac{m\sqrt{k}}{(m\wedge n)^{3/2}}\qquad \textrm{and}\qquad m\gtrsim \lambda k.\label{d1bcsconsv2} 
       \end{align}
   In sparse recovery,  the regime of main interests is  $k\ll n$ and $k\lesssim m\le n$, under which $\frac{m\sqrt{k}}{(m\wedge n)^{3/2}}$ typically reads $\sqrt{k/m}$ and is hence  small enough. In this regime, (\ref{d1bcsconsv2}) covers $\lambda = \widetilde{O}(1)$ and $m=\widetilde{\Omega}(k)$, which is also the regime in which the matching upper bound (\ref{d1bcsupper}) holds. %under $\lambda=\widetilde{O}(1)$ that is of main interest.
   %These are exactly the conditions for the achievability of (\ref{d1bcsupper}) (see Appendix \ref{app:background}).  
\end{rem}

\begin{rem}\label{rem:reproduce1}
    As will be clear from the next section, the main accomplishment of our proof is to construct $\bu^*,\bv^*\in \calK_{1,k}$ satisfying $\|\bu^*-\bv^*\|_2 \gtrsim (\frac{\lambda k}{Lm\sqrt{\log m}})^{1/3}$ and $\sign(\bA\bu^*+\btau)=\sign(\bA\bv^*+\btau)$. As such, we readily recover the hyperplane tessellation lower bound in (\ref{d1bcsteselower}) up to logarithmic factors, in the light of
    \begin{align*}
        &\sup_{\bu,\bv\in \calK_{1,k}}\bigg|\frac{\sqrt{2\pi}\lambda}{m}d_{\rm H}\big(\sign(\bA\bu+\btau),\sign(\bA\bv+\btau)\big)-\|\bu-\bv\|_2\bigg| \\
        &\,\gtrsim \bigg|\frac{\sqrt{2\pi}\lambda}{m}d_{\rm H}\big(\sign(\bA\bu^*+\btau),\sign(\bA\bv^*+\btau)\big)-\|\bu^*-\bv^*\|_2\bigg|\\
        &\,=\|\bu^*-\bv^*\|_2 \gtrsim \bigg(\frac{\lambda k}{Lm\sqrt{\log m}}\bigg)^{1/3}. 
    \end{align*} 
\end{rem}

\subsection{One-Bit Compressed Sensing} 
We now move on to 1bCS without dithering (\ref{1bcs}), where one seeks to recover $\bx\in \calK_{1,k}^*=\sqrt{k}\mathbb{B}_1^n\cap \mathbb{S}^{n-1}$ from $\by=\sign(\bA\bx)$. We make the following assumption on $\bA$. 

\begin{assumption} \label{assump3}
    The rows of $\bA$---denoted by $\ba_1,\cdots,\ba_m$---are independent. For $i\in[m]$, write $\ba_i=(\ba_{i,\setminus n}^\top ,a_{in})^\top $ where $\ba_{i,\setminus n}\in \mathbb{R}^{n-1}$ is the vector of the first $n-1$ entries of $\ba_i$ and $a_{in}$ denotes the last entry. For some $L,\rho>0$, assume for any $i\in [m]$, $t>0$ that 
    \begin{gather*}
        \sup_{\bu\in \mathbb{S}^{n-2}}\mathbb{P}\bigg(|\ba_{i,\setminus n}^\top \bu|\ge t \biggm|a_{in}\bigg)\le 2\exp\Big(-\frac{t^2}{2L^2}\Big),
        \\
        \mathbb{P}\big(|a_{in}|\le t\big)\le 2\rho t.
    \end{gather*}
\end{assumption}

Our main result for canonical 1bCS without dithering (\ref{1bcs}) is as follows. 

\begin{theorem}[Lower bound for 1bCS over $\calK_{1,k}^*$] \label{thm:1bcslower} 
Assume that Assumption \ref{assump3} holds, and assume $k\ge 8$. For any $\frac{\sqrt{k}}{\sqrt{(m\wedge n)-1}}\le r\le 1$, if 
\begin{align}\label{1bcsscaconcon}
    \frac{k}{m\sqrt{\log m}}\ge 64\rho r^3L, 
\end{align}
then with probability at least $1-\frac{2}{m}-\exp(-\frac{k}{48})$, any decoder $\hat{\bx}$ that is a measurable function of $(\bA,\by=\sign(\bA\bx))$ satisfies
\begin{align} \label{uniformlowerthm2}
    \sup_{\bx\in \calK_{1,k}^*}\|\hat{\bx}-\bx\|_2\ge \frac{r}{2\sqrt{2}}.
\end{align}
\end{theorem}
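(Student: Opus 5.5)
We reduce Theorem \ref{thm:1bcslower} to the dithered construction behind Theorem \ref{thm:d1bcslower} via a lifting map. Assume the proof of Theorem \ref{thm:d1bcslower} yields, for $\bA'\in\mathbb{R}^{m\times(n-1)}$ with $L$-sub-Gaussian rows and dither $\btau$ satisfying Assumption \ref{assump2} with parameter $\rho_\tau$, a vector $\bu^*\in \sqrt{k'}\mathbb{B}_1^{n-1}\cap\mathbb{B}_2^{n-1}$ such that $\|\bu^*\|_2\ge r$ and $\sign(\bA'\bu^*+\btau)=\sign(\btau)$. This holds with high probability as soon as $k'/(m\sqrt{\log m})\ge 32\rho_\tau r^3 L$ and $r\ge\sqrt{k'/(m\wedge(n-1))}$.

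\textbf{Step 1 (conditioning and lifting).} We write $\ba_i=(\ba_{i,\setminus n}^\top,a_{in})^\top$. For $\bz\in\mathbb{R}^{n-1}$, consider the lifted vector $\bx(\bz)=(\bz^\top,1)^\top/\sqrt{1+\|\bz\|_2^2}\in\mathbb{S}^{n-1}$. Its signs are
\[
\sign(\ba_i^\top\bx(\bz))=\sign(\ba_{i,\setminus n}^\top\bz+a_{in}).
\]
This is exactly a D1bCS measurement with matrix $\bA'=[\ba_{i,\setminus n}^\top]$ and dither $\tau_i=a_{in}$. We condition on $(a_{in})_{i}$. By Assumption \ref{assump3}, the rows of $\bA'$ are conditionally independent and $L$-sub-Gaussian, and the $\tau_i$ satisfy the small-ball condition with $\rho_\tau=\rho$. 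The dither is not independent of $\bA$, but only the conditional statement is needed, since the construction of $\bu^*$ uses sub-Gaussian tails of $\bA'$ given $\btau$ together with small-ball bounds on $\btau$. We apply the construction with $k'=k/2$ (or $k/4$), which accounts for the constants $64$ and $k/48$ in the statement. Integrating over $\btau$ then gives the probability $1-\frac{2}{m}-\exp(-\frac{k}{48})$. The condition $r\ge\sqrt{k}/\sqrt{(m\wedge n)-1}$ matches the requirement with $n-1$ in place of $n$.

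\textbf{Step 2 (membership in $\calK_{1,k}^*$).} We check that $\bx(\bz)\in\calK_{1,k}^*$ for $\bz\in\{\bm 0,\bu^*\}$. Since $\|\bz\|_2\le1$,
\[
\|\bx(\bz)\|_1\le \|\bz\|_1+1\le \sqrt{k'}+1\le\sqrt{k}
\]
when $k'=k/4$ and $k\ge 8$, because $\sqrt{k}/2+1\le\sqrt{k}$ holds for $k\ge4$. Thus both $\bx(\bm 0)=\be_n$ and $\bx(\bu^*)$ lie in $\calK_{1,k}^*$ and produce identical sign vectors $\sign(\btau)$. Any decoder outputs the same $\hat{\bx}$ for both, so
\[
\sup\|\hat{\bx}-\bx\|_2\ge \tfrac12\|\bx(\bu^*)-\be_n\|_2 .
\]

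\textbf{Step 3 (separation).} Let $s=\|\bu^*\|_2\in[r,1]$. Then
\[
\|\bx(\bu^*)-\be_n\|_2^2=\frac{s^2}{1+s^2}+\Big(1-\frac{1}{\sqrt{1+s^2}}\Big)^2\ge \frac{s^2}{2}\ge\frac{r^2}{2}.
\]
Hence the error is at least $r/(2\sqrt2)$, which is (\ref{uniformlowerthm2}).

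\textbf{Main obstacle.} The delicate point is Step 1. It must be verified that the construction from the proof of Theorem \ref{thm:d1bcslower} uses only the conditional sub-Gaussianity of $\bA'$ given $\btau$ together with the small-ball property of $\btau$, and not genuine independence. I would rerun that argument conditionally on $\btau$: the event that few $|\tau_i|$ are small is controlled by (\ref{smallball}), and the consistency $\sign(\bA'\bu^*+\btau)=\sign(\btau)$ is then ensured via sub-Gaussian bounds on $|\ba_{i,\setminus n}^\top\bu^*|$ relative to $|\tau_i|$. This is exactly what Assumption \ref{assump3} provides. The remaining work is tracking the constants, namely the sparsity budget shift $k\to k/4$ and the $n\to n-1$ shift.
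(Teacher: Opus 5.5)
Your strategy is the paper's own: lift by appending a last coordinate $1$ and normalizing (the map $\Phi_s$), read the last column $\bA_{\{n\}}$ as a dither, and exhibit the indistinguishable pair $\be_n$, $\Phi_s(r\bu^*)$. Steps 2--3 and the conditioning in Step 1 are fine; Lemma \ref{lem:construct} is already stated conditionally on $\bxi$, so nothing needs rerunning.

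The genuine problem is the constants, caused by using Theorem \ref{thm:d1bcslower} as a black box. Its proof takes $s=\lfloor k'/r^2\rfloor$ and uses the crude bound $s\ge k'/(2r^2)$. So it requires $k'/(m\sqrt{\log m})\ge 32\rho r^3L$ and has failure term $\exp(-k'/24)$.
\begin{itemize}
\item With $k'=k/4$, which your Step 2 uses, you would need $k/(m\sqrt{\log m})\ge 128\rho r^3L$, and you would only get $\exp(-k/96)$. Neither follows from (\ref{1bcsscaconcon}).
\item With $k'=k/2$ the constants $64$ and $k/48$ do come out. But your membership check $\sqrt{k/2}+1\le\sqrt k$ needs $k\ge 6+4\sqrt2\approx 11.7$, so it fails for $k\in\{8,\dots,11\}$.
\item At $k=8$ no $k'$ works at all: the black box needs $k'\ge 4$, while membership needs $k'\le(\sqrt8-1)^2\approx 3.34$.
\end{itemize}
So the claim that this ``accounts for the constants $64$ and $k/48$'' is false as written.

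The repair, as in the paper, is to bypass Theorem \ref{thm:d1bcslower}. Apply Lemma \ref{lem:construct} directly with $\bZ=\bA_{[s]}$, $\bxi=\bA_{\{n\}}$ and $s=\lfloor(\sqrt k-1)^2/r^2\rfloor$.
\begin{itemize}
\item This $s$ gives $r\sqrt s+1\le\sqrt k$ by construction.
\item The bound $s\le(m\wedge n)-1$ follows from $r\ge\sqrt k/\sqrt{(m\wedge n)-1}$.
\item Since $\lfloor x\rfloor\ge x-1\ge x-r^{-2}$ for $r\le 1$, you get $s\ge(k-2\sqrt k)/r^2\ge k/(4r^2)$ whenever $k\ge 64/9$, in particular for $k\ge 8$.
\end{itemize}
Then (\ref{goodcondition}) is implied by exactly (\ref{1bcsscaconcon}), and $\exp(-s/12)\le\exp(-k/48)$.
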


We specialize the above result to the recovery of $\bx\in \calK^*_{1,k}$ from $\by=\sign(\bA\bx)$, where $\bA\sim N^{m\times n}(0,1)$. In this case, Assumption \ref{assump3} holds with  $L=1$ and $\rho = 1/\sqrt{2\pi}$. 

\begin{corollary}\label{cor:1bcssharp}
    Consider the recovery of $\bx \in \calK^*_{1,k}$ from $\by=\sign(\bA\bx)$, where $\bA\sim N^{m\times n}(0,1)$ and $k\ge 8$. If
    \begin{align} \label{1bcscons}
        \frac{m\sqrt{k\log m}}{((m\wedge n)-1)^{3/2}}\le\frac{\sqrt{2\pi}}{64},\quad\textrm{and}\quad m\sqrt{\log m}\ge \frac{\sqrt{2\pi}k}{64} 
    \end{align}
    %\am{arya: the inequality should be reversed?}
    then with probability at least $1-\frac{2}{m}-\exp(-\frac{k}{48})$, any decoder $\hat{\bx}$ that is a measurable function of $(\bA,\by=\sign(\bA\bx))$ satisfies 
    \begin{align*}
        \sup_{\bx\in \calK^*_{1,k}}\|\hat{\bx}-\bx\|_2\ge \frac{1}{8}\Big(\frac{\pi}{4}\Big)^{1/6}\bigg(\frac{k}{m\sqrt{\log m}}\bigg)^{1/3}. 
    \end{align*}
\end{corollary}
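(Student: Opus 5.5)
Corollary \ref{cor:1bcssharp} will follow by specializing Theorem \ref{thm:1bcslower} to $L=1$, $\rho=1/\sqrt{2\pi}$ and choosing $r$ to saturate condition (\ref{1bcsscaconcon}). First, I will check that the Gaussian design satisfies Assumption \ref{assump3}. The rows are independent, and $\ba_{i,\setminus n}$ is independent of $a_{in}$, so conditioning on $a_{in}$ is vacuous. For $\bu\in\mathbb{S}^{n-2}$ we have $\ba_{i,\setminus n}^\top\bu\sim N(0,1)$, and the standard Gaussian tail bound gives $\mathbb{P}(|g|\ge t)\le 2e^{-t^2/2}$, which is the case $L=1$. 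The density of $a_{in}$ is at most $1/\sqrt{2\pi}$, so $\mathbb{P}(|a_{in}|\le t)\le 2t/\sqrt{2\pi}$, which is the case $\rho=1/\sqrt{2\pi}$.

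Next, I will take the largest $r$ allowed by (\ref{1bcsscaconcon}), namely
\begin{align*}
r:=\Big(\frac{k}{64\rho L m\sqrt{\log m}}\Big)^{1/3}=\Big(\frac{\sqrt{2\pi}\,k}{64\, m\sqrt{\log m}}\Big)^{1/3}.
\end{align*}
With this choice, (\ref{1bcsscaconcon}) holds with equality. It remains to verify the range $\sqrt{k}/\sqrt{(m\wedge n)-1}\le r\le 1$.
\begin{itemize}
\item The upper bound $r\le1$ is equivalent to $\sqrt{2\pi}k\le 64m\sqrt{\log m}$, which is the second condition in (\ref{1bcscons}).
\item For the lower bound, cubing gives the requirement $k^{3/2}/((m\wedge n)-1)^{3/2}\le \sqrt{2\pi}k/(64m\sqrt{\log m})$. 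This rearranges to $m\sqrt{k\log m}/((m\wedge n)-1)^{3/2}\le\sqrt{2\pi}/64$, which is the first condition in (\ref{1bcscons}).
\end{itemize}

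Applying Theorem \ref{thm:1bcslower} then gives, with probability at least $1-\frac{2}{m}-\exp(-\frac{k}{48})$, the bound $\sup\|\hat{\bx}-\bx\|_2\ge r/(2\sqrt2)$. To match the displayed constant, I will simplify $(\sqrt{2\pi}/64)^{1/3}=(2\pi)^{1/6}/4$. This gives
\begin{align*}
\frac{r}{2\sqrt2}=\frac{(2\pi)^{1/6}}{8\sqrt2}\Big(\frac{k}{m\sqrt{\log m}}\Big)^{1/3}.
\end{align*}
Finally, $(2\pi)^{1/6}/\sqrt2=(2\pi/8)^{1/6}=(\pi/4)^{1/6}$, which yields exactly the stated constant $\frac18(\pi/4)^{1/6}$. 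There is no real obstacle here; the only care needed is the exponent bookkeeping in the range check for $r$ and in the final constant.
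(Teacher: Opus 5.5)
Your proposal is correct and follows the paper's proof: you specialize Theorem~\ref{thm:1bcslower} with $L=1$, $\rho=1/\sqrt{2\pi}$, and choose $r=\frac{(2\pi)^{1/6}}{4}\big(\frac{k}{m\sqrt{\log m}}\big)^{1/3}$ so that (\ref{1bcsscaconcon}) holds with equality. You then observe that the two conditions in (\ref{1bcscons}) are exactly the range requirements $\sqrt{k}/\sqrt{(m\wedge n)-1}\le r\le 1$, and your checks of Assumption~\ref{assump3} and of the final constant $\frac18(\pi/4)^{1/6}$ are both correct.
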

\begin{rem}
    Corollary \ref{cor:1bcssharp} indicates that the uniform error rate in (\ref{l1sparserate}) is sharp, up to logarithmic factors. The conditions in (\ref{1bcscons}) are benign---the first condition is typically satisfied in sparse recovery (as explained in Remark \ref{rem:d1bcsbenigh}), while the second condition, up to logarithmic factors, is exactly the sample complexity for the matching upper bound (\ref{l1sparserate}) to hold with high probability. 
\end{rem}
\begin{rem}\label{rem:reproduce2}
    The core of the argument is to construct $\bu^*,\bv^*\in \calK_{1,k}^*$ that are distant but indistinguishable. Specifically, in the setting of Corollary \ref{cor:1bcssharp}, it is shown that $\bu^*,\bv^*\in \calK_{1,k}^*$ obeying $\|\bu^*-\bv^*\|_2\gtrsim (\frac{k}{m\sqrt{\log m}})^{1/3}$ and $\sign(\bA\bu^*)=\sign(\bA\bv^*)$ exist with high probability. Similarly to Remark \ref{rem:reproduce1},  this implies 
    \begin{align*}
         \sup_{\bu,\bv\in \calK_{1,k}^*}\bigg|\frac{1}{m}d_{\rm H}\big(\sign(\bA\bu),\sign(\bA\bv)\big) -d_{\mathbb{S}^{n-1}}(\bu,\bv)\bigg|  \gtrsim \bigg(\frac{k}{m\sqrt{\log m}}\bigg)^{1/3}.
    \end{align*}
    %\am{arya: normalization by $m$?}
    This reproduces (\ref{1bcstesselower}) from \cite{dirksen2026resolution}, up to logarithmic factors. 
\end{rem}

\section{Proofs}\label{sec:proof}
In this section, we provide the complete proofs for the results in Section \ref{sec:mainresults}. For a matrix $\bA\in \mathbb{R}^{m\times n}$ (which reduces to a vector when $n=1$) and the index sets $\calI \subset [m]$ and $\calJ\subset [n]$, we use $\bA^{\calI}_{\calJ}\in \mathbb{R}^{|\calI|\times |\calJ|}$ to denote the sub-matrix of $\bA$ that retains the rows in $\calI$ and columns in $\calJ$ only. If retaining either all the $m$ rows or all the $n$ columns, we write $\bA^{\calI}:= \bA^{\calI}_{[n]}$ and $\bA_{\calJ}:=\bA^{[m]}_{\calJ}$.

The main technical ingredient is the following ``construction lemma'', which   constructs a point in a radius-$r$  Euclidean sphere in $\mathbb{R}^s$ that is indistinguishable from $0$ in the dithered model. As we shall see, this immediately yields Theorem \ref{thm:d1bcslower} by embedding a suitable Euclidean ball into $\calK_{1,k}$, and when combined with a lifting trick, it also yields Theorem \ref{thm:1bcslower}. Corollaries \ref{cor:d1bcssharp}, \ref{cor:1bcssharp} then directly follow from Theorems \ref{thm:d1bcslower}, \ref{thm:1bcslower}.

In particular, by restricting to an $s$-dimensional space, we denote by $\bZ\in \mathbb{R}^{m\times s}$ the sensing matrix, and by $\bxi \in \mathbb{R}^m$ the ``dither''. Our construction of $\bu^*\in \mathbb{S}^{s-1}$ such that 
\begin{align}
    \sign(\bZ(r\bu^*)+\bxi)=\sign(\bxi)\label{exactconsistent}
\end{align} is conceptually simple: 
\begin{enumerate}
    \item Observe that, to ensure  (\ref{exactconsistent}), it is sufficient to  have $|\bZ\bu^*|<|\frac{\bxi}{r}|$, where $|\cdot|$ and $<$ apply elementwise, we  separately treat the entries of $|\frac{\bxi}{r}|$ that are ``smaller'' or ``larger'' than $2L\sqrt{\log m}$;

    \item   We show that under the condition in (\ref{goodcondition}), with high probability, the number of ``small'' entries is lower than $s-1$, hence we can construct $\bu^*\in \mathbb{S}^{s-1}$ which is orthogonal to the corresponding rows in $\bZ$ to ensure (\ref{exactconsistent}) for these smaller entries of $|\frac{\bxi}{r}|$; 

    \item  The remaining ``large'' entries of $|\frac{\bxi}{r}|$ are greater than $2L\sqrt{\log m}$ and  are therefore greater than the corresponding entries of $|\bZ\bu^*|$, in light of a standard sub-Gaussian bound.
\end{enumerate}
 For similar idea in related problems, see \cite[Theorem 3.5]{chen2024robust} for instance.

\begin{lem}[Construction lemma]\label{lem:construct}
    Given $r\in (0,1]$ and integer $1\le s\le m$, assume that $\bxi\in \mathbb{R}^m$ has independent entries $\{\xi_i:i\in[m]\}$ satisfying 
    \begin{align}\label{smallballxi}
        \mathbb{P}\big(|\xi_i| \le t\big)\le 2\rho t,\quad \forall t>0
    \end{align}
    for some $\rho>0$. Also assume that, when conditioning on $\bxi$, the rows of $\bZ\in \mathbb{R}^{m\times s}$---denoted by $\bz_1,\cdots,\bz_m$---are independent and satisfy 
    \begin{align} \label{consg}
        \sup_{\bu\in \mathbb{S}^{s-1}}\mathbb{P}\bigg(|\bz_i^\top \bu|\ge t\biggm|\bxi\bigg)\le 2\exp\Big(-\frac{t^2}{2L^2}\Big),\quad \forall t>0
    \end{align}
    for some $L>0$.  If 
    \begin{align}\label{goodcondition}
        \frac{s}{m\sqrt{\log m}}\ge 16\rho rL, 
    \end{align}
    then with probability at least $1-\frac{2}{m}-\exp(-\frac{s}{12})$, there exists $\bu^*\in \mathbb{S}^{s-1}$ such that 
    \begin{align}
        \sign(\bZ(r\bu^*)+\bxi)=\sign(\bxi). \label{desired}
    \end{align}
\end{lem}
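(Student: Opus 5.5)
Set the threshold $\theta:=2rL\sqrt{\log m}$ and split the indices into the ``small'' set $\calI:=\{i\in[m]:|\xi_i|\le \theta\}$ and its complement. Step 1 controls $|\calI|$ using only $\bxi$. By (\ref{smallballxi}), the indicators $\mathbbm{1}(|\xi_i|\le\theta)$ are independent Bernoulli variables with means at most $2\rho\theta=4\rho rL\sqrt{\log m}$. So $\mathbb{E}|\calI|\le 4\rho r L m\sqrt{\log m}\le s/4$ by (\ref{goodcondition}). A Chernoff bound (domination by a binomial with mean $s/4$, then the multiplicative bound for exceeding twice or three times the mean) gives $|\calI|\le s-1$ with probability at least $1-\exp(-s/12)$. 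The case of very small $s$, where $s-1$ could be $0$, needs a little care: then we need $\calI=\emptyset$, and the same tail bound still applies because the requirement is only $|\calI|<s$. I would tune the constants so the exponent comes out as exactly $s/12$.

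Step 2 works conditionally on $\bxi$ on the event $|\calI|\le s-1$. The rows $\{\bz_i:i\in\calI\}$ span a subspace of dimension at most $s-1$ in $\mathbb{R}^s$, so there is a unit vector $\bu^*\in\mathbb{S}^{s-1}$ orthogonal to all of them. The point to handle carefully is measurability and independence: choose $\bu^*$ as a measurable function of $\bxi$ and $\{\bz_i:i\in\calI\}$ only, for instance the first vector of a fixed deterministic rule applied to the null space. Conditioned on $\bxi$, the rows $\bz_j$ with $j\notin\calI$ are then independent of $\bu^*$. For $i\in\calI$ we have $\bz_i^\top(r\bu^*)=0$, so $\sign(\xi_i+0)=\sign(\xi_i)$ holds trivially.

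Step 3 handles $j\notin\calI$, where $|\xi_j|>\theta$. It suffices to show $|\bz_j^\top\bu^*|<2L\sqrt{\log m}$, since then $r|\bz_j^\top\bu^*|<|\xi_j|$ and the sign is preserved. Condition on $\bxi$ and on $\{\bz_i\}_{i\in\calI}$, which fixes $\bu^*$. By (\ref{consg}) and independence of the rows, $\mathbb{P}(|\bz_j^\top\bu^*|\ge 2L\sqrt{\log m})\le 2\exp(-2\log m)=2/m^2$. A union bound over at most $m$ indices gives failure probability at most $2/m$. Combining with Step 1 yields success probability at least $1-2/m-\exp(-s/12)$, and on this event $\bu^*$ satisfies (\ref{desired}).

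The main obstacle is making Step 3 rigorous: the sub-Gaussian bound (\ref{consg}) is stated for a fixed $\bu$, but $\bu^*$ is random. This is resolved exactly by the choice in Step 2, since $\bu^*$ depends only on rows indexed by $\calI$ and on $\bxi$, while $\calI$ itself is $\bxi$-measurable. Conditioning on $\bxi$ then makes $\calI$ deterministic, so the independence of rows given $\bxi$ applies cleanly. The Chernoff constant in Step 1 is routine.
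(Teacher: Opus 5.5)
Your proposal is correct and follows essentially the same route as the paper. Both use the threshold $2rL\sqrt{\log m}$ on $|\xi_i|$, a Chernoff bound on $|\calI|$, a unit vector in the kernel of the rows indexed by $\calI$ chosen as a function of $\bxi$ and those rows only, and a conditional sub-Gaussian union bound over $\calI^c$. The only cosmetic differences are these: the paper targets $|\calI|\le s/2$, which gives exactly the exponent $s/12$ via the ``twice the mean'' Chernoff bound and already forces $|\calI|\le s-1$; and the paper invokes $\mathbb{P}(\xi_i=0)=0$ on $\calI$, whereas you observe directly that adding zero cannot change the sign.
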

\begin{proof}
     A sufficient condition for (\ref{desired}) is $|\bZ(r\bu^*)|<|\bxi|$, or equivalently, 
     \begin{align}\label{sufficecon}
         |\bZ\bu^*|< \Big|\frac{\bxi}{r}\Big|,
     \end{align}
     where $|\cdot|$ and $<$ apply elementwise. For any $i\in[m]$, (\ref{smallballxi}) and (\ref{goodcondition}) give 
     \begin{align}\label{barPbound}
         \bar{P}_i:=\mathbb{P}\bigg(\Big|\frac{\xi_i}{r}\Big|\le 2L\sqrt{\log m}\bigg)\le 4\rho rL\sqrt{\log m} \le \frac{s}{4m}\le \frac{1}{4}.
     \end{align}
     Since the entries of $\bxi$ are independent, the cardinality of 
     \begin{align}\label{defI}
         \calI:=\{i\in[m]:|\xi_i/r|\le 2L\sqrt{\log m}\} 
     \end{align} is stochastically dominated by a binomial variable $ {\rm Binomial}(m,\frac{s}{4m}),$ hence Chernoff bound yields
     \begin{align*}
         \mathbb{P}\Big(|\calI|\le\frac{s}{2}\Big) \ge \mathbb{P}\bigg({\rm Binomial}\Big(m,\frac{s}{4m}\Big)\le\frac{s}{2}\bigg) \ge 1- \exp\Big(-\frac{s}{12}\Big).
     \end{align*}
     From now on, we condition on $\bxi$ and assume that \begin{align}
         |\calI|\le \frac{s}{2}, \label{Ismallset}
     \end{align} which holds with the promised probability. To guarantee (\ref{sufficecon}), it is sufficient to ensure
     \begin{gather}\label{suffI}
         |\bZ^{\calI}\bu^*|<\Big|\frac{\bxi^{\calI}}{r}\Big|,\\\label{suffIc}
         |\bZ^{\calI^c}\bu^*|<\Big|\frac{\bxi^{\calI^c}}{r}\Big|,
     \end{gather}
     where $\calI^c:=[m]\setminus \calI$. By (\ref{Ismallset}), $\bZ^{\calI}$ is a ``fat'' matrix with more columns than rows, hence we can choose 
     \begin{align}
         \bu^*\in {\rm Ker}(\bZ^\calI)\cap \mathbb{S}^{s-1}.\label{constructustar}
     \end{align}
     This choice of $\bu^*$ ensures $|\bZ^{\calI}\bu^*| = 0$, and since (\ref{smallballxi}) gives $\mathbb{P}(\xi_i=0)=0$, (\ref{suffI}) now holds almost surely. It remains to show (\ref{suffIc}). By the definition of $\calI$ in (\ref{defI}), 
     \begin{align*}
         \Big|\frac{\bxi^{\calI^c}}{r}\Big| > 2L\sqrt{\log m},
     \end{align*}
     and hence it suffices to have
     \begin{align}\label{suffdesire}
         |\bZ^{\calI^c}\bu^*| \le 2L\sqrt{\log m}.
     \end{align}
     By assumption, when conditioning on $\bxi$, the rows of $\bZ$ are independent. Hence, $\bu^*$ constructed from $\bZ^{\calI}$ is independent of $\bZ^{\calI^c}$, and therefore (\ref{consg}) gives 
     \begin{align*}
         \mathbb{P}\bigg(|\bz_i^\top  \bu^*|\ge t\biggm| \bxi\bigg)\le 2\exp\Big(-\frac{t^2}{2L^2}\Big),\quad \forall t>0 
     \end{align*}
     holds for any $i\in \calI^c$. Therefore, setting $t=2L\sqrt{\log m}$ and taking a union bound over $i\in \calI^c$ establish that, conditioning on $\bxi$,
     \begin{align} \label{sgmaxupper}
         \max_{i\in \calI^c}|\bz_i^\top \bu^*| \le 2L\sqrt{\log m}
     \end{align}
     holds with probability at least $1-\frac{2}{m}$. This is exactly the desired (\ref{suffdesire}). We can now conclude the proof, as we have found $\bu^*$ such that (\ref{suffI})--(\ref{suffIc}) hold with the promised probability.  
\end{proof}

\subsection{Technical Overview}\label{sec:technicaloverview}
Before proceeding to the detailed proofs, we pause to provide a technical overview, with an emphasis on the two key components---an \emph{embedding condition} (cf. (\ref{l1dcon1}) and (\ref{1bcscon1a}) below) and a \emph{construction condition} (cf. (\ref{l1dcon2}) and (\ref{1bcscon1b}) below)---that dictate the non-standard $(k/m)^{1/3}$ rate:
 
\begin{enumerate}
    \item (Theorem \ref{thm:d1bcslower} \& Corollary \ref{cor:d1bcssharp}) In D1bCS with signal space $\calK_{1,k}$, by identifying $\mathbb{R}^s$ with the first $s$ coordinates of $\mathbb{R}^n$,  we first embed the radius-$r$ $s$-dimensional Euclidean ball $r\mathbb{B}_2^s$ into the signal space
    \begin{align*}
        r\mathbb{B}_2^s\subset \calK_{1,k}
    \end{align*}
    so long as 
    \begin{align}\label{l1dcon1}
        s \le  \frac{k}{r^2}.
    \end{align}
    We then restrict our attention to $r\mathbb{B}_2^s$ and seek to find a point in $r\mathbb{S}^{s-1}$ that is indistinguishable from $0$.\footnote{An intuition why the restriction to $r\mathbb{B}_2^s$ does not lose too much is that, under $s\asymp \frac{k}{r^2}$, the Gaussian width of $r\mathbb{B}_2^s$ is of order   $\sqrt{k}$, which is the same as that of the full signal space $\calK_{1,k}$ up to a logarithmic factor.} We rely on Lemma \ref{lem:construct} to accomplish this under the condition in (\ref{goodcondition}), which reads
    \begin{align}\label{l1dcon2}
        \frac{s}{m\sqrt{\log m}}\gtrsim \frac{r}{\lambda}
    \end{align}
    under $\rho\asymp \lambda^{-1}$ and $L\asymp 1$. Combining (\ref{l1dcon1}) and (\ref{l1dcon2}), we conclude that the maximal $r$ is at the order of $(\frac{\lambda k}{m\sqrt{\log m}})^{1/3}$, which leads to our final lower bound. 
    \item (Theorem \ref{thm:1bcslower} \& Corollary \ref{cor:1bcssharp}) In the original 1bCS problem the signal space is $\calK_{1,k}^*$, which does not immediately contain a small Euclidean ball. Our remedy is to introduce a lifting map $\Phi_{s}:\mathbb{B}_2^s\to \mathbb{S}^{n-1}$ in (\ref{liftingmap}), then Lemma \ref{lem:lifting} ensures the embedding of $r\mathbb{B}_2^s$ into $\calK^*_{1,k}$ through   the lifting map $\Phi_s$, i.e., $\Phi_{s}(r \mathbb{B}_2^s) \subset \calK_{1,k}^*$, under the condition (\ref{con1}) below.   Under a suitably large $k$, (\ref{con1}) reads
    \begin{align}\label{1bcscon1a}
        s\lesssim \frac{k}{r^2}. 
    \end{align}
    Since $\Phi_s$ preserves Euclidean distance as in (\ref{lowerPhi}), all that remains is to find  $\bu^* \in r\mathbb{S}^{s-1}$ such that $\sign(\bA\Phi_s(\bu^*))=\sign(\bA\Phi_s(\bm{0}))$. It turns out that this can be accomplished via Lemma \ref{lem:construct} as long as (\ref{goodcondition}) holds for $\rho,L\asymp 1$, i.e.,
    \begin{align}\label{1bcscon1b}
        \frac{s}{m\sqrt{\log m}} \gtrsim r.
    \end{align}
    Combining (\ref{1bcscon1a}) and (\ref{1bcscon1b}) establishes the desired lower bound.
\end{enumerate}

\subsection{Proofs of Theorem \ref{thm:d1bcslower}, Corollary \ref{cor:d1bcssharp}}
\begin{proof}[Proof of Theorem \ref{thm:d1bcslower}]
For the given $r\in [\sqrt{\frac{k}{m\wedge n}},1]$ and $k\ge 1$, we set 
\begin{align}\label{schoice}
    s = \Big\lfloor \frac{k}{r^2}\Big\rfloor \in [m\wedge n].
\end{align}
%By identifying a vector in $\mathbb{R}^s$ with the first $s$ entries of a vector in $\mathbb{R}^n$, we have
%\begin{align}\label{embedd1b}
%    r\mathbb{B}_2^s \subset \calK_{1,k}
%\end{align}
%due to $r\le 1$ and 
%\begin{align}
%    \forall u\in r\mathbb{B}_2^s,\quad \|\bu\|_1 \le r\sqrt{s} \le r\sqrt{k/r^2} = \sqrt{k}.
%\end{align}
Under Assumptions \ref{assump1}--\ref{assump2},
we invoke Lemma \ref{lem:construct} with $\bZ=\bA_{[s]}$, $\bxi =  \btau$ and (\ref{schoice}), yielding the following statement: if 
\begin{align}\label{ifthm1proof}
    \frac{\lfloor r^{-2}k\rfloor}{m\sqrt{\log m}}\ge 16\rho_\tau rL,
\end{align}
then with probability at least $1-\frac{2}{m}-\exp(-\frac{1}{12}\lfloor \frac{k}{r^2}\rfloor)$, there exists $\bu^*\in \mathbb{S}^{s-1}$ such that
\begin{align}\label{ustarsat}
    \sign(\bA_{[s]}(r\bu^*)+\btau) = \sign(\btau).
\end{align}
Under $r\le 1$ and $k\ge 2$, we have $\lfloor \frac{k}{r^2}\rfloor \ge \frac{k}{2r^2}$ and hence (\ref{thm1con}) guarantees (\ref{ifthm1proof}). The above conclusion then ensures the existence of $\bu^*\in \mathbb{S}^{s-1}$ such that (\ref{ustarsat}) holds with the promised probability.

All that remains is to show that such $\bu^*$ leads to the lower bound on uniform recovery error. Let $\tilde{\bu}^*=((\bu^*)^\top ,\bm{0}^{n-s})^\top \in \mathbb{S}^{n-1}$, then (\ref{ustarsat}) yields
\begin{align}\label{d1bcsequalmea}
    \sign(\bA (r\tilde{\bu}^*)+\btau )=\sign(\btau)=\sign(\bA\bm{0}+\btau).
\end{align}
Note that $r\tilde{\bu}^* \in \mathbb{B}_2^n$ since $r\le 1$. Combining with
\begin{align*}
    \|r\tilde{\bu}^*\|_1 = \|r\bu^*\|_1 \le r\sqrt{s} \stackrel{(\ref{schoice})}{\le} r\sqrt{\frac{k}{r^2}} \le \sqrt{k},
\end{align*}
we conclude that $r\tilde{\bu}^*\in \calK_{1,k}$. It is evident that $\bm{0}\in\calK_{1,k}$. In light of (\ref{d1bcsequalmea}), any estimator $\hat{\bx}$ that is a measurable function of $(\bA,\btau,\by)$ returns the same estimate for $\bm{0}$ and $r\tilde{\bu}^*$. Denoting this estimate by $\hat{\bx}$, triangle inequality gives  
$
    r = \|r\tilde{\bu}^*\|_2 \le \|\hat{\bx}-r\tilde{\bu}^*\|_2 + \|\hat{\bx}\|_2 $, which yields
\begin{align}\label{yieldlower}
   \|\hat{\bx}-\bm{0}\|_2\vee \|\hat{\bx}-r\tilde{\bu}^*\|_2\ge \frac{r}{2}.
\end{align}
This leads to the claim (\ref{thm1d1bcsconclu}) since $\bm{0},\, r\tilde{\bu}^*\in \calK_{1,k}$.
\end{proof}

We now prove Corollary \ref{cor:d1bcssharp}. 

\begin{proof}[Proof of Corollary \ref{cor:d1bcssharp}] 
For $\btau\sim {\rm Unif}[-\lambda,\lambda]^m$, Assumption \ref{assump2} holds with $\rho_{\tau} = \frac{1}{2\lambda}$. Theorem \ref{thm:d1bcslower} hence asserts that, if 
\begin{align} \label{thm1cons}
 r\in \bigg[\sqrt{\frac{k}{m\wedge n}},1\bigg]\quad\textrm{and}\quad r^3\le \frac{\lambda k}{16Lm\sqrt{\log m}},  
\end{align}
then with the promised probability $\sup_{\bx\in \calK_{1,k}}\|\hat{\bx}-\bx\|_2 \ge \frac{r}{2}$. Setting 
\begin{align*}
    r = \bigg(\frac{\lambda k}{16Lm\sqrt{\log m}}\bigg)^{1/3}
\end{align*}
ensures the second condition of (\ref{thm1cons}), and it is easy to verify that the conditions in (\ref{cond1bcs}) ensure the first condition in (\ref{thm1cons}). The proof is now complete. 
\end{proof}

\subsection{Proofs of Theorem \ref{thm:1bcslower}, Corollary \ref{cor:1bcssharp}}
Given some $r\in(0,1]$, $s\in [n-1]$, and $\bu\in \mathbb{B}_2^s$, we define the lifting map $\Phi_{s}:\mathbb{B}_2^s \to \mathbb{S}^{n-1}$ as
\begin{align}\label{liftingmap}
    \Phi_{s}(\bu) = \frac{
    (\bu^\top ,\bm{0}^{n-s-1},1)^\top 
    }{\sqrt{\|\bu\|_2^2+1}},
\end{align}
which lifts $\bu$ from dimension $s$ to dimension $n$ by appending $(\bm{0}^{n-s-1},1) = (0,\cdots,0,1)$ with $n-s-1$ zeros, followed by a normalization in $\ell_2$ norm. We notice that
\begin{align*}
    \Phi_{s}(\bm{0}) = (\bm{0}^{n-1},1)^\top  = \be_n\in \calK_{1,k}^*
\end{align*}
as long as $k\ge 1$. The following lemma states that $\Phi_{s}$ preserves the Euclidean distance between $\bu$ and $\bm{0}$ up to a factor of $\frac{1}{\sqrt{2}}$, and that 
$\Phi_{s}(r\mathbb{B}_2^s)\subset \calK^*_{1,k}$ when $s$ is not too large.

\begin{lem}[Lifting lemma] \label{lem:lifting}
Let $r\in(0,1]$, $s\in [n-1]$, $k\ge 4$, and $\bu\in \mathbb{B}_2^s$, then $\Phi_{s}$ defined in (\ref{liftingmap}) satisfies
\begin{align}\label{lowerPhi}
    \|\Phi_{s}(\bu)-\Phi_{s}(\bm{0})\|_2 \ge \frac{\|\bu\|_2}{\sqrt{2}}.
\end{align}
If moreover 
\begin{align}\label{con1}
    s\le \frac{(\sqrt{k}-1)^2}{r^2},
\end{align}
then 
\begin{align*}
    \Phi_{s}(r \mathbb{B}_2^s) \subset \calK_{1,k}^*.
\end{align*}
\end{lem}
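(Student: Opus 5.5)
The proof is a direct computation with the explicit formula (\ref{liftingmap}); there are two separate claims. Throughout, write $t:=\|\bu\|_2\in[0,1]$.

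For the distance bound (\ref{lowerPhi}), both $\Phi_s(\bu)$ and $\Phi_s(\bm{0})=\be_n$ lie on $\mathbb{S}^{n-1}$. Hence
\[
\|\Phi_s(\bu)-\be_n\|_2^2 = 2-2\langle \Phi_s(\bu),\be_n\rangle = 2-\frac{2}{\sqrt{1+t^2}}.
\]
It then suffices to show the scalar inequality $2-2(1+t^2)^{-1/2}\ge t^2/2$ for $t\in[0,1]$. Setting $a=\sqrt{1+t^2}\in[1,\sqrt2]$, this becomes $2-2/a\ge (a^2-1)/2$. Multiplying by $2a>0$, it is equivalent to $4(a-1)\ge a(a-1)(a+1)$, i.e. $a(a+1)\le 4$. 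This holds because $a\le\sqrt2$ gives $a(a+1)\le 2+\sqrt2<4$ (the case $a=1$ is trivial). This is the only place where $\bu\in\mathbb{B}_2^s$ is used.

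For the inclusion, take $\bu\in r\mathbb{B}_2^s$. The point $\Phi_s(\bu)$ is on $\mathbb{S}^{n-1}$ by construction, so we only need $\|\Phi_s(\bu)\|_1\le\sqrt{k}$. We have
\[
\|\Phi_s(\bu)\|_1=\frac{\|\bu\|_1+1}{\sqrt{1+\|\bu\|_2^2}}\le \|\bu\|_1+1\le \sqrt{s}\,\|\bu\|_2+1\le r\sqrt{s}+1 .
\]
By (\ref{con1}), $r\sqrt s\le \sqrt k-1$, which gives the bound $\sqrt k$. Here $k\ge4$ guarantees $\sqrt k-1>0$, so (\ref{con1}) is meaningful and $s\ge1$ is allowed.

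There is no real obstacle; the only slightly delicate step is the elementary scalar inequality in the first part, and the reduction via $a=\sqrt{1+t^2}$ handles it.
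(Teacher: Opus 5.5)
Your proposal is correct and follows essentially the paper's proof. The paper expands coordinates to get $\|\Phi_s(\bu)-\Phi_s(\bm{0})\|_2^2 = 2t^2/\bigl(a(a+1)\bigr)$ with $a=\sqrt{1+t^2}\le\sqrt{2}$, then uses $\sqrt{2}(\sqrt{2}+1)\le 4$; you reach the same inequality $a(a+1)\le 4$ via $2-2\langle\cdot,\cdot\rangle$ for unit vectors, and your $\ell_1$ bound $r\sqrt{s}+1\le\sqrt{k}$ is the paper's argument verbatim.
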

\begin{proof} 
    Some algebra finds that 
    \begin{align*}
        \|\Phi_{s}(\bu)-\Phi_{s}(\bm{0})\|_2^2 &= \frac{\|\bu\|_2^2}{\|\bu\|_2^2+1} + \bigg(\frac{1}{\sqrt{\|\bu\|_2^2+1}}-1\bigg)^2 \\
        & = \frac{2\|\bu\|_2^2}{\sqrt{\|\bu\|_2^2+1}\cdot (\sqrt{\|\bu\|_2^2+1}+1)}.
    \end{align*}
    By   $\bu\in \mathbb{B}_2^s$, 
    \begin{align*}
        \|\Phi_{s}(\bu)-\Phi_{s}(\bm{0})\|_2^2 \ge \frac{2\|\bu\|_2^2}{\sqrt{2}(\sqrt{2}+1)} \ge \frac{\|\bu\|_2^2}{2}.
    \end{align*}
        (\ref{lowerPhi}) follows immediately.

        In light of    $\Phi_{s}(\bu)\in \mathbb{S}^{n-1}$, to guarantee $\Phi_{s}(r\mathbb{B}_2^s)\in\calK_{1,k}^*=\sqrt{k}\mathbb{B}_1^n\cap \mathbb{S}^{n-1}$, we only need $\|\Phi_{s}(\bu)\|_1\le \sqrt{k}$ for all $\bu\in r\mathbb{B}_2^s$. Note that for any $\bu\in r\mathbb{B}_2^s$, 
        \begin{align*}
            \|\Phi_{s}(\bu)\|_1 \le \|\bu\|_1 + 1 \le r\sqrt{s}+1, 
        \end{align*}
        and hence it is enough to have $r\sqrt{s}+1 \le \sqrt{k}$, which is exactly (\ref{con1}). 
\end{proof}

With the lifting lemma, we are ready to prove Theorem \ref{thm:1bcslower} by constructing $\bu^* \in r\mathbb{S}^{s-1}$ such that $\Phi_{s}(\bu^*)$ and $\Phi_{s}(\bm{0})$ are indistinguishable:
\begin{align}\label{indis1bcs}
    \sign\big(\bA\Phi_{s}(\bu^*)\big) = \sign\big(\bA\Phi_{s}(\bm{0})\big). 
\end{align}
Due to the homogeneity of $\sign$ and the definition of $\Phi_{s}$, the problem again reduces to the construction in Lemma \ref{lem:construct}.

\begin{proof}[Proof of Theorem \ref{thm:1bcslower}] Note that 
$r\ge \frac{\sqrt{k}}{\sqrt{(m\wedge n)-1}}$ ensures 
\begin{align}
    \frac{(\sqrt{k}-1)^2}{r^2}\le (m\wedge n)-1.\label{smallscales}
\end{align}
For some integer $s$ satisfying \begin{align}
    1\le s\le \frac{(\sqrt{k}-1)^2}{r^2},\label{srange1bcs}
\end{align}
it is enough to construct $\bu^*\in r\mathbb{S}^{s-1}$ such that (\ref{indis1bcs}) holds. In fact, once this is accomplished,   Lemma \ref{lem:lifting} then gives $\|\Phi_{s}(\bu^*)-\Phi_{s}(\bm{0})\|_2\ge \frac{r}{\sqrt{2}}$ and $\Phi_{s}(\bu^*),\Phi_{s}(\bm{0})\in \calK_{1,k}^*$, and therefore an argument parallel to (\ref{yieldlower}) yields the desired (\ref{uniformlowerthm2}). By  the homogeneity of $\sign$ and the definition of $\Phi_{s}$,
\begin{align*}
    (\ref{indis1bcs})\iff\sign\big(\bA_{[s]}\bu^*+\bA_{\{n\}}\big) = \sign(\bA_{\{n\}}),
\end{align*}
where $\bA_{[s]}$ denotes the sub-matrix of $\bA$ formed by the first $s$ columns, and $\bA_{\{n\}}$ denotes the last column of $\bA$. Since $s\le (m\wedge n)-1$ holds by (\ref{smallscales}) and (\ref{srange1bcs}), we can invoke Lemma \ref{lem:construct} with $\bZ=\bA_{[s]}$ and $\bxi=\bA_{\{n\}}$. Also, notice that under Assumption \ref{assump3}, the conditions of Lemma \ref{lem:construct} are satisfied. Lemma \ref{lem:construct} therefore asserts that, if   (\ref{goodcondition}) holds, then we are able to find the desired $\bu^*\in r\mathbb{S}^{s-1}$ with probability at least $1-\frac{2}{m}-\exp(-\frac{s}{12})$. We now choose the largest $s$ in  (\ref{srange1bcs}), i.e., 
\begin{align*}
    s = \bigg\lfloor \frac{(\sqrt{k}-1)^2}{r^2}\bigg\rfloor.
\end{align*}
Under $k\ge 8$ it is easy to verify that $s\ge \frac{k}{4r^2}$, and hence (\ref{goodcondition})  is guaranteed by (\ref{1bcsscaconcon}) that we assume, and also the probability term $\exp(-\frac{s}{12})$ can be relaxed to $\exp(-\frac{k}{48})$. The proof is now complete.
\end{proof}

Corollary \ref{cor:1bcssharp} follows straightforwardly from Theorem \ref{thm:1bcslower}. 
\begin{proof}[Proof of Corollary \ref{cor:1bcssharp}] For $\bA\sim N^{m\times n}(0,1)$, Assumption \ref{assump3} holds with $L=1$ and $\rho = \frac{1}{\sqrt{2\pi}}$.  Hence by Theorem \ref{thm:1bcslower}, if 
\begin{align}\label{coro2cons}
    r\in\bigg[\sqrt{\frac{k}{(m\wedge n)-1}},1\bigg]\quad \textrm{and}\quad \frac{k}{m\sqrt{\log m}} \ge \frac{64}{\sqrt{2\pi}} r^3, 
\end{align}
then with the promised probability (\ref{uniformlowerthm2}) holds.  Setting 
\begin{align*}
    r = \frac{(2\pi)^{1/6}}{4}\bigg(\frac{k}{m\sqrt{\log m}}\bigg)^{1/3}
\end{align*}
ensures the second condition of (\ref{coro2cons}), then the conditions in (\ref{1bcscons}) guarantee the  first condition in (\ref{coro2cons}). The proof is complete. 
\end{proof}

\section{Extensions}\label{sec:extension}
To ensure readability, we do not pursue full generality of the results in previous sections. We show in this section that our results can be extended toward a few directions.  We will focus on delivering the main ideas and conclusions but leave some of the details to avid readers. Note that we write $T_1\asymp T_2$ or $T_1=\Theta(T_2)$ if $C_2T_1\le T_2\le C_1T_1$ holds for some universal constants $C_1>C_2>0$, and we use $\widetilde{\Theta}$ to hide logarithmic factors in $(m,n)$. For concreteness, we typically work with $\bA\sim N^{m\times n}(0,1)$ and $\btau\sim {\rm Unif}[-\lambda,\lambda]^m$, but for the lower bounds it should be straightforward to relax these (e.g., to Assumptions \ref{assump1}--\ref{assump2} or Assumption \ref{assump3}).

\subsection{$\ell_q$ Sparse Signals}\label{sec:lq}
Our main results establish the near-optimal lower bound $\widetilde{\Omega}((k/m)^{1/3})$ for the recovery of $\ell_1$ sparse vectors. As recalled in the `Introduction,' before our work, the only lower bound is $\Omega(k/m)$ for the recovery of exactly $k$-sparse vectors in $\Sigma^n_k$, and this lower bound is attainable by NBIHT. In light of this, there is an essential gap between the optimal rates in recovery of exactly sparse signals and $\ell_1$ approximately sparse signals.

Here we smoothly bridge these two cases by considering $\ell_q$  sparse vectors for $q\in[0,1]$, implicitly for some sparsity level $k$. For $q\in(0,1)$, we define the $\ell_q$ $k$-sparse vectors in $\mathbb{B}_2^n$ as
\begin{align} \label{setdef1}
    \calK_{q,k}:= k^{\frac{1}{q}-\frac{1}{2}}\mathbb{B}_q^n\cap \mathbb{B}_2^n, 
\end{align}
which recovers the set of $\ell_1$ sparse vectors when $q=1$. Notice that 
\begin{align*}
    \bu=(u_i)_{i=1}^n\in k^{\frac{1}{q}-\frac{1}{2}}\mathbb{B}_q^n \,\iff\, \sum_{i=1}^n|u_i|^q \le k ^{1-\frac{q}{2}}, 
\end{align*}
the constraint   $\bu\in  k^{\frac{1}{q}-\frac{1}{2}}\mathbb{B}_q^n $ reduces to $\bu\in \Sigma^n_k$  (i.e., the exact $k$-sparsity) when $q\to 0$. As such, we let
\begin{align}\label{setdef2}
    \calK_{0,k}:= \Sigma^n_k \cap \mathbb{B}_2^n.
\end{align}
In the canonical 1bCS model without dithering (\ref{1bcs}), we have to assume $\bx\in\mathbb{S}^{n-1}$. Hence we work with the set of signals  
\begin{align}\label{setdef3}
    \calK_{q,k}^* :=\calK_{q,k}\cap \mathbb{S}^{n-1},\quad \forall q\in[0,1].  
\end{align}
\begin{rem} It is easy to verify that, for any $0\le q_1\le q_2\le 1$, \[ \calK_{q_1,k}\subset \calK_{q_2,k} \qquad\text{and}\qquad \calK_{q_1,k}^*\subset \calK_{q_2,k}^* . \] Hence, larger values of $q$ correspond to weaker sparsity. Moreover, $\calK_{q,k},~q\in(0,1]$ can be viewed as the tightest $\ell_q$ relaxation of $\calK_{0,k}:=\Sigma^n_k\cap \mathbb{B}_2^n$ in the sense that the equal-magnitude $k$-sparse vector $(k^{-1/2},\cdots,k^{-1/2},0,\cdots,0)$ lives in the boundary of $\calK_{q,k}$.  
\end{rem}

We pause to review existing upper bounds for the recovery of $\ell_q$ sparse vectors. For the recovery of $\bx\in \calK_{q,k}^*$ from $\bA\sim N^{m\times n}(0,1)$ and $\by=\sign(\bA\bx)$, the best known uniform recovery error rate is
\begin{align}\label{1bcslqupper}
    \sup_{\bx\in \calK_{q,k}^*}\|\hat{\bx}-\bx\|_2 =\widetilde{O}\Big(\big(\frac{k}{m}\big)^{\frac{2-q}{2+q}}\Big).
\end{align}
 In the dithered model, the best known error rate of recovering $\bx\in \calK_{q,k}$ from $\bA\sim N^{m\times n}(0,1)$, $\btau\sim {\rm Unif}[-\lambda,\lambda]^m$ and $\by=\sign(\bA\bx+\btau)$ is 
\begin{align}\label{d1bcslqupper}
    \sup_{\bx\in\calK_{q,k}}\|\hat{\bx}-\bx\|_2 =\widetilde{O}\Big(\big(\frac{\lambda k}{m}\big)^{\frac{2-q}{2+q}}\Big).
\end{align} 
The two upper bounds are attained by hamming distance minimization over $\calK^*_{q,k}$ \cite{oymak2015near,dirksen2021non,chen2024optimal} or the projected gradient descent algorithm with projection onto $k^{\frac{1}{q}-\frac{1}{2}}\mathbb{B}_q^n$ \cite{chen2024optimal}. Specifically, \cite[Remark 7]{chen2024optimal} shows that (\ref{1bcslqupper}) is achieved by Hamming distance minimization over $\calK_{q,k}^*$ whenever $m=\widetilde{\Omega}(k)$, and that (\ref{d1bcslqupper}) is achieved by Hamming distance minimization over $\calK_{q,k}$  whenever $m=\widetilde{\Omega}(\lambda^{-\frac{2q}{2-q}}k)$.

The main goal of this subsection is to demonstrate that our techniques are capable of establishing matching lower bounds for (\ref{1bcslqupper}) and (\ref{d1bcslqupper}).  In fact, the only difference of working with $\ell_q$ sparse vectors lies in the embedding of $r\mathbb{B}_2^s$ into the space of $\ell_q$ sparse vectors, where the corresponding  conditions (\ref{l1dcon1}) and (\ref{1bcscon1a}) should be generalized. This is made precise by the following lemma, whose proof is relegated to Appendix \ref{app:provelem} to preserve the presentation flow.  

\begin{lem}[Embedding of $r\mathbb{B}_2^s$ into $\calK_{q,k}$ and $\calK_{q,k}^*$] \label{lem:lqadjust}Let $r\in (0,1]$, $q\in [0,1]$, and $s\in [n]$. Then we have the following for $\calK_{q,k}$ and $\calK_{q,k}^*$ defined in (\ref{setdef1}), (\ref{setdef2}) and (\ref{setdef3}):
\begin{itemize}
    \item If 
    \begin{align}\label{d1bcslqcon}
        s\le \frac{k}{r^{\frac{2q}{2-q}}},
    \end{align}
    then by identifying $\mathbb{R}^s$ with the first $s$ coordinates of $\mathbb{R}^n$, 
    \begin{align*}
        r\mathbb{B}_2^s\subset \calK_{q,k}.
    \end{align*}
    \item If $s\le n-1$ and 
    \begin{align}\label{1bcslqcon}
        s \le \bigg(\frac{k^{\frac{2-q}{2}}-1}{r^q}\bigg)^{\frac{2}{2-q}}, 
    \end{align}
    then 
    \begin{align*}
        \Phi_s(r\mathbb{B}_2^s)\subset\calK_{q,k}^*. 
    \end{align*}
\end{itemize}
\end{lem}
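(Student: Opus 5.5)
The plan is to mirror the $\ell_1$ arguments (the embedding check in the proof of Theorem \ref{thm:d1bcslower} and the second part of Lemma \ref{lem:lifting}), replacing the Cauchy--Schwarz bound $\|\bu\|_1\le\sqrt{s}\|\bu\|_2$ by the Hölder-type bound
\begin{align*}
    \|\bu\|_q^q=\sum_{i=1}^s|u_i|^q\le s^{1-\frac{q}{2}}\|\bu\|_2^q,\qquad \bu\in\mathbb{R}^s,\ q\in(0,1].
\end{align*}
This follows from Hölder's inequality with exponents $\frac{2}{q}$ and $\frac{2}{2-q}$ applied to $\sum_i |u_i|^q\cdot 1$, or equivalently from concavity of $t\mapsto t^{q/2}$ applied to the $u_i^2$. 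The case $q=0$ will be handled separately, with the convention $r^{0}=1$, so the conditions read $s\le k$ and $s\le k-1$ respectively.

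For the first bullet, take $\bu\in r\mathbb{B}_2^s$, padded with zeros to lie in $\mathbb{R}^n$. Then $\|\bu\|_2\le r\le1$, so $\bu\in\mathbb{B}_2^n$. For $q\in(0,1]$ the displayed bound gives $\sum_i|u_i|^q\le s^{1-q/2}r^q$. We need this to be at most $k^{1-q/2}$, which is equivalent to
\begin{align*}
    s\le k\,r^{-\frac{2q}{2-q}},
\end{align*}
and that is exactly (\ref{d1bcslqcon}). For $q=0$ the vector is supported on $[s]$, so $\|\bu\|_0\le s\le k$, and membership in $\calK_{0,k}$ is immediate.

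For the second bullet, $\Phi_s(\bu)\in\mathbb{S}^{n-1}$ by construction, so only the $\ell_q$ (or $\ell_0$) constraint needs checking. For $q\in(0,1]$, the normalization factor $(\|\bu\|_2^2+1)^{-1/2}\le1$ only shrinks entries, and $t\mapsto t^q$ is increasing. Hence
\begin{align*}
    \sum_i|\Phi_s(\bu)_i|^q\le\sum_{i\le s}|u_i|^q+1\le s^{1-\frac{q}{2}}r^q+1.
\end{align*}
Requiring this to be at most $k^{1-q/2}$ rearranges to $s^{\frac{2-q}{2}}\le (k^{\frac{2-q}{2}}-1)/r^q$. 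This is (\ref{1bcslqcon}) after raising to the power $\frac{2}{2-q}$, a step that is legitimate because the right-hand side is nonnegative whenever $k\ge1$. For $q=0$, $\Phi_s(\bu)$ has support of size at most $s+1\le k$, which is what (\ref{1bcslqcon}) gives at $q=0$.

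No step should be a real obstacle. The only points needing care are the degenerate case $q=0$ (the exponent conventions and the reading of $\mathbb{B}_0$) and checking that the $+1$ from the lifted last coordinate enters exactly as in (\ref{1bcslqcon}).
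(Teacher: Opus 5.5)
Your proposal is correct and follows the paper's proof essentially step for step. Both use the same H\"older bound $\|\bu\|_q^q\le s^{1-q/2}\|\bu\|_2^q$, rearrange $s^{1-q/2}r^q\le k^{1-q/2}$ (respectively $s^{1-q/2}r^q+1\le k^{1-q/2}$ after dropping the normalization denominator $(\|\bu\|_2^2+1)^{q/2}\ge 1$), and handle $q=0$ by counting the support. Your remark that $k\ge 1$ makes the base in (\ref{1bcslqcon}) nonnegative is a small point the paper leaves implicit.
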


The lower bounds then readily follow. For the recovery of $\bx\in \calK_{q,k}^*$ from $\by=\sign(\bA\bx+\btau)$ where $\bA\sim N^{m\times n}(0,1)$ and $\btau\sim {\rm Unif}[-\lambda,\lambda]^m$, we combine the ``embedding condition'' (\ref{d1bcslqcon}) with the ``construction condition'' (\ref{l1dcon2}) to yield that the maximal $r$ is at the order of 
\begin{align*}
     \bigg(\frac{\lambda k}{m\sqrt{\log m}}\bigg)^{\frac{2-q}{2+q}},
\end{align*}
which also dictates the order of our lower bound. Hence, our lower bound is near-optimal in that it matches the upper bound in (\ref{d1bcslqupper}), up to logarithmic factors. By generalizing Theorem \ref{thm:d1bcslower} and Corollary \ref{cor:d1bcssharp} to $\ell_q$ sparse vectors, our main conclusion is the following:

\medskip

\noindent\textbf{Takeaway.} {\it For the recovery of $\bx\in \calK_{q,k}$ from $\by=\sign(\bA\bx+\btau)$ with   $\bA\sim N^{m\times n}(0,1)$, $\btau\sim {\rm Unif}[-\lambda,\lambda]^m$ for some $q\in[0,1]$, the optimal uniform recover error rate is (in general) at the order of $\widetilde{\Theta}((\frac{\lambda k}{m})^{\frac{2-q}{2+q}})$.}

We similarly treat the recovery of $\bx\in \calK_{q,k}^*$ from $\by=\sign(\bA\bx)$ with Gaussian matrix $\bA$. In light of the new ``embedding condition'' (\ref{1bcslqcon})---which reads $s\lesssim kr^{-\frac{2q}{2-q}}$ as long as $k$ is suitably large (e.g., $k\ge 2^{\frac{2}{2-q}}$)---and the same ``construction condition'' in (\ref{1bcscon1b}), we are able to prove a lower bound at the order of 
\begin{align*}
     \bigg(\frac{k}{m\sqrt{\log m}}\bigg)^{\frac{2-q}{2+q}}. 
\end{align*}
Over all $q\in[0,1]$, this matches the upper bound (\ref{1bcslqupper}) up to logarithmic factors and hence is near-optimal. Therefore, we reach the following takeaway message:

\medskip

\noindent\textbf{Takeaway.} {\it For the recovery of $\bx\in \calK^*_{q,k}$ from $\by=\sign(\bA\bx)$ with   $\bA\sim N^{m\times n}(0,1)$ for some $q\in[0,1]$, the optimal uniform recover error rate is, in general, at the order of $\widetilde{\Theta}\left(\big(\frac{k}{m}\big)^{\frac{2-q}{2+q}}\right)$.}

\begin{rem} \label{rem:recover}
For $q=0$, we recover the only existing lower bound $\Omega(\frac{k}{m})$ for 1bCS of exactly sparse signals, up to a logarithmic factor. We point out that in \cite{jacques2013robust,acharya2017improved}  this lower bound was established by a deterministic argument based on the pigeonhole principle, thus it holds for arbitrary matrix $\bA$. In contrast, we provide a novel  probabilistic argument, which yields near-optimal lower bounds over all $q\in [0,1]$ but instead only applies to a certain class of random matrices.  
\end{rem}

\subsection{Designs with Sub-Weibull Tails}\label{sec:subweibull}
As mentioned in Remark \ref{rem:recover}, the previously known lower bound $\Omega(k/m)$ is valid for arbitrary designs, while our new lower bounds rely on independent, sub-Gaussian $a_i$'s (see Assumptions \ref{assump1}--\ref{assump3}). To narrow this gap, we seek to relax the assumptions of the designs in our lower bounds.

We point out that
\textit{our lower bounds remain valid for independent $\ba_i$'s with general sub-Weibull tails.} See  \cite{kuchibhotla2022moving} for a brief introduction of sub-Weibull random variables. More precisely, (\ref{subgauss1}) in Assumption \ref{assump1} for the dithered model can be relaxed to 
\begin{align}\label{subweibull}
    \mathbb{P}\big(|\ba_i^\top \bu|\ge t\big)\le 2\exp\bigg(-\frac{1}{2}\Big(\frac{t}{L}\Big)^\alpha\bigg),\quad \forall t>0
\end{align}
for general $\alpha>0$. To see this, the only step in our proofs that relies on   {\it sub-Gaussianity}  is (\ref{sgmaxupper}). Under (\ref{subweibull}), for some $\bu\in \mathbb{S}^{n-1}$ oblivious to $\bA$ we still have 
\begin{align*}
    \mathbb{P}\bigg(\max_{i\in [m]}\,|\ba_i^\top \bu|\le L\big(4\log m\big)^{1/\alpha}\bigg) \ge1-\frac{2}{m}. 
\end{align*}
As such, for the recovery of $\bx\in \calK_{1,k}$ from $\by=\sign(\bA\bx+\btau)$,  our argument yields a lower bound at the order of 
\begin{align*}
    \bigg(\frac{\lambda k}{Lm\log ^{1/\alpha}m}\bigg)^{1/3}
\end{align*}
under $\bA$ obeying (\ref{subweibull}) and $\btau\sim {\rm Unif}[-\lambda,\lambda]^m$.

In parallel, for the recovery of $\bx\in \calK^*_{1,k}$ from $\by=\sign(\bA\bx)$ where the rows of  $\bA$ are independent and have sub-Weibull tails (and the last column of $\bA$ satisfies some conditions similar to Assumption \ref{assump3}), our techniques lead to a $\Omega((\frac{k}{Lm\log^{1/\alpha}m})^{1/3})$ lower bound.

Finally, we note in passing that the lower bound $\widetilde{\Omega}((\lambda k/m)^{1/3})$ for the dithered model also holds for {\it arbitrary bounded designs} satisfying $\|\ba_i\|_2=\widetilde{O}(1),~i\in[m]$, which is widely studied in machine learning (e.g., \cite{kakade2011efficient}). To see this, after the construction of $\bu^*\in \mathbb{S}^{n-1}$ in (\ref{constructustar}), the worst-case bound for bounded designs 
\begin{align*}
    \max_{i\in \calI^c} |\bz_i^\top \bu|\le  \max_{i\in \calI^c}\|\bz_i\|_2 \le  \max_{i\in \calI^c} \|\ba_i\|_2 =\widetilde{O}(1)
\end{align*}
can play the role of (\ref{sgmaxupper}). However, it is unclear whether our lower bound extends to deterministic designs with   rows of $\Theta(\sqrt{n})$ Euclidean norm.

\subsection{Adversarial Bit Flips}\label{sec:adbitflip}
We now move on to robust one-bit compressed sensing under \emph{adversarial bit flips}, a widely adopted corruption model in the area \cite{plan2012robust,matsumoto2024robust,dirksen2023robust,awasthi2016learning,dirksen2021non,chinot2022adaboost}. Instead of observing the noiseless binary vector $\by$, one observes $\hat{\by}\in\{-1,1\}^m$ obeying 
\begin{align}\label{betadiffer}
    d_{\rm H}\big(\hat{\by},\by\big) \le \beta m,
\end{align}
where $\beta\le c_*$ for some small enough universal constant $c_*>0$. Note that such $\hat{\by}$ can be chosen by an adversary having knowledge of $(\bA,\btau,\bx)$, as long as it satisfies (\ref{betadiffer}).

In the dithered model $\by=\sign(\bA\bx+\btau)$, the best known upper bound on uniform recovery error over $\calK_{1,k}$  is 
\begin{align}\label{rd1bcsupper}
    \sup_{\bx\in \calK_{1,k}}\|\hat{\bx}-\bx\|_2=\widetilde{O}\bigg(\big(\frac{\lambda k}{m}\big)^{1/3}+\lambda \beta\bigg)
\end{align} and can be achieved by, e.g., a convex program under $m=\widetilde{\Omega}(\lambda^{-2}k)$  \cite[Theorem 1]{jung2021quantized}. In the canonical 1bCS model $\by=\sign(\bA\bx)$ with $\beta$-fraction adversarial bit flips, the best known upper bound for uniform recovery error over $\calK_{1,k}^*$ is 
\begin{align}\label{r1bcsupper}
    \sup_{\bx\in \calK_{1,k}^*}\|\hat{\bx}-\bx\|_2=\widetilde{O}\bigg(\big(\frac{k}{m}\big)^{1/3}+\beta\bigg)
\end{align}
and is achieved by, e.g., projected gradient descent under $m=\widetilde{\Omega}(k)$ \cite[Section 5.1]{chen2024optimal}. As such, the effect of the adversarial bit flips is captured by the term $\widetilde{O}(\lambda \beta)$ in (\ref{rd1bcsupper}), and the term $\widetilde{O}(\beta)$ in (\ref{r1bcsupper}).

We now provide a simple argument to show the tightness of the error increment terms. For concreteness, we consider the dithered model $\by=\sign(\bA\bx+\btau)$ with $\bA\sim N^{m\times n}(0,1)$ and $\btau\sim {\rm Unif}[-\lambda,\lambda]$. Choosing $\bx_1,\bx_2\in \calK_{1,k}$ satisfying (suppose $\lambda\beta\le 2$ for simplicity)
\begin{align*}
    \|\bx_1-\bx_2\|_2 =  \lambda\beta, 
\end{align*} 
we recall the known bound (e.g., \cite[Section 4.2]{chen2024optimal})
\begin{align*}
    \mathbb{P}\bigg(\sign(\ba_i^\top \bx_1+\btau_i)\ne\sign(\ba_i^\top \bx_2+\btau_i)\bigg)\le \frac{\|\bx_1-\bx_2\|_2}{2\lambda} = \frac{\beta}{2}.
\end{align*}
Hence Chernoff bound gives that
\begin{align*}
    \mathbb{P}\bigg(d_{\rm H}\big(\sign(\bA\bx_1+\btau),\sign(\bA\bx_2+\btau)\big) \le \frac{3\beta m}{4}\bigg) \ge 1-\exp(-c\beta m).
\end{align*}
On this event, one may observe the same vector $\hat{\by}$ for $\bx_1,\,\bx_2$---for instance, if the adversary changes no observations when the true signal is $\bx_1$ but corrupts the observations to $\sign(\bA\bx_1+\btau)$ when the true signal is $\bx_2$. If this happens, then any estimator $\hat{\bx}=\hat{\bx}(\bA,\btau,\hat{\by})$ returns the same estimate for $\bx_1,\,\bx_2$, thus incurring uniform recovery error not smaller than $\frac{\lambda\beta}{2}$. A similar argument establishes a lower bound $\Omega(\beta)$ for the uniform recovery error in the recovery of $\bx\in \calK_{1,k}^*$ from $\bA\sim N^{m\times n}(0,1)$ and $\hat{\by}$ obeying $d_{\rm H}(\hat{\by},\sign(\bA\bx))\le \beta m$. As our main results in Section \ref{sec:mainresults} imply the sharpness of $\widetilde{O}((\frac{\lambda k}{m})^{1/3})$ in (\ref{rd1bcsupper}) and $\widetilde{O}((\frac{k}{m})^{1/3})$ in (\ref{r1bcsupper}), we reach the following conclusion: 

\medskip

\noindent\textbf{Takeaway.} {\it Consider the corruption model of $\beta$-fraction adversarial bit flips.
In 1bCS of $\bx\in \calK_{1,k}^*$ with $\bA\sim N^{m\times n}(0,1)$, the optimal uniform recovery error rate is $\widetilde{\Theta}\left(\big(\frac{k}{m}\big)^{1/3}+\beta\right)$; in D1bCS of $\bx\in\calK_{1,k}$ with $\bA\sim N^{m\times n}(0,1)$ and $\btau\sim {\rm Unif}[-\lambda,\lambda]^m$, the optimal uniform recovery error rate is $\widetilde{\Theta}\left(\big(\frac{\lambda k}{m}\big)^{1/3}+\lambda\beta\right)$.}

\subsection{Low-Rank Matrices}\label{sec:matrix}
Beyond sparsity, our results can be extended to (approximately) low-rank matrices. We denote the set of $n_1\times n_2$ rank-$\bar{r}$ matrices     by $M^{n_1,n_2}_{\bar{r}}:=\{\bU\in\mathbb{R}^{n_1\times n_2}:\rank(\bU)
\le \bar{r}\}$, the Frobenius norm ball in $\mathbb{R}^{n_1\times n_2}$ by $\mathbb{B}_{\rm F}^{n_1,n_2}:=\{\bU\in\mathbb{R}^{n_1\times n_2}:\|\bU\|_{\rm F}\le 1\}$, the Frobenius norm sphere in a matrix space by $\mathbb{S}_{\rm F}$ (with dimension depending on the context), and the nuclear norm ball in $\mathbb{R}^{n_1\times n_2}$ by $\mathbb{B}_{\rm nu}^{n_1,n_2}:=\{\bU\in \mathbb{R}^{n_1\times n_2}:\|\bU\|_{\rm nu}\le 1\}$.\footnote{The nuclear norm of a matrix $\bU$, denoted by $\|\bU\|_{\rm nu}$, is defined as the sum of all the singular values of $\bU.$} Without loss of generality, we assume $n_1\ge n_2$.

The recovery of $\bX\in M_{\bar{r}}^{n_1,n_2}$ has already been resolved. In the dithered model concerning the recovery of $\bX\in \calX_0:=M^{n_1,n_2}_{\bar{r}}\cap \mathbb{B}_{\rm F}^{n_1,n_2}$ from $\{\bA_i\}_{i=1}^m\stackrel{iid}{\sim}N^{n_1\times n_2}(0,1)$, $\btau\sim {\rm Unif}[-\lambda,\lambda]^m$ and $\{y_i=\sign(\langle \bA_i, \bX\rangle+\tau_i)\}_{i=1}^m$,\footnote{In this subsection, we suppose $\lambda=\Theta(1)$ is large enough for the dithered model and hence we do not track the dependence on $\lambda$.} or the canonical model concerning recovery of $\bX\in\calX_0^*:= M^{n_1,n_2}_{\bar{r}}\cap \mathbb{S}_{\rm F}$ from $\{\bA_i\}_{i=1}^m\stackrel{iid}{\sim}N^{n_1\times n_2}(0,1)$ and $\{y_i=\sign(\langle \bA_i,\bX \rangle)\}_{i=1}^m$, under $m=\widetilde{\Omega}(\bar{r}n_1)$ the sharp uniform error rate is 
\begin{align*}
\inf_{\substack{\hat{\bX}=\hat{\bX}(\{\bA_i,\tau_i,y_i\}_{i=1}^m)\\{\rm or}\\\hat{\bX}=\hat{\bX}(\{\bA_i,y_i\}_{i=1}^m)}}\sup_{\substack{\bX\in \calX_0\\{\rm or}\\\bX\in \calX_0^*}}\, \|\hat{\bx}-\bx\|_{\rm F} =\widetilde{\Theta}\Big(\frac{\bar{r}n_1}{m}\Big).
\end{align*}
Here the upper bound is achieved by a projected gradient descent algorithm (with a projection onto $M_{\bar{r}}^{n_1,n_2}$),  and the lower bound is established by a direct adaptation of the $\Omega(k/m)$ lower bound from \cite{jacques2013robust,acharya2017improved}. See \cite[Section 2]{chen2024optimal}.

To be more practical, one can relax the exact low-rankness to approximately low-rankness captured by low nuclear norm. More precisely, we define the set of  $\ell_1$ approximately rank-$\bar{r}$ matrices in the Frobenius norm ball as 
\begin{align*}
    \calX_1:=\sqrt{\bar{r}}\mathbb{B}_{\rm nu}^{n_1,n_2}\cap \mathbb{B}_{\rm F}^{n_1,n_2}, 
\end{align*} 
and similarly define 
\begin{align*}
    \calX_1^*:= \sqrt{\bar{r}}\mathbb{B}_{\rm nu}^{n_1,n_2} \cap \mathbb{S}_{\rm F}.
\end{align*}
They serve as relaxations of $\calX_0$ and $\calX_0^*$, respectively. It was known \cite{chen2024optimal} that projected gradient descent (with projection onto $\sqrt{r}\mathbb{B}_{\rm nu}^{n_1,n_2}$) achieves the uniform recovery upper bound
\begin{align}\label{approlrupper}
    \sup_{\bX\in \calX_1\textrm{\,or\,}\calX_1^*}\|\hat{\bX}-\bX\|_{\rm F} =\widetilde{O}\Big(\big(\frac{\bar{r}n_1}{m}\big)^{1/3}\Big). 
\end{align}

The techniques of this paper are sufficient to establish a near-matching lower bound for (\ref{approlrupper}). Let us demonstrate this for the dithered model. As explained in Section \ref{sec:technicaloverview}, the irregular rate $(k/m)^{1/3}$ is dictated by the two conditions about embedding and construction. To work with $\calX_1:= \sqrt{\bar{r}}\mathbb{B}_{\rm nu}^{n_1,n_2}\cap \mathbb{B}_{\rm F}^{n_1,n_2}$, all we need is to modify the two conditions appropriately. For some $R\in [n_2]$, by identifying $\mathbb{R}^{n_1\times R}$ with the first $R$ columns of $\mathbb{R}^{n_1\times n_2}$, we are able to embed a radius-$r$ Frobenius norm ball in $\mathbb{R}^{n_1\times R}$ into $\calX_1$:
\begin{align}\label{lrembedding}
    r\mathbb{B}_{\rm F}^{n_1,R}\subset \calX_1,
\end{align}
as long as $r\le 1$ and 
\begin{align}
     r\sqrt{R}\le \sqrt{\bar{r}}\,\iff \, R\le \frac{\bar{r}}{r^2}. \label{lrembed}
\end{align}
(Note that the matrices in $r\mathbb{B}_{\rm F}^{n_1,R}$ have nuclear norms lower than $r\sqrt{R}$.) Applying Lemma~\ref{lem:construct} to the $n_1R$-dimensional small ball $r\mathbb{B}_{\rm F}^{n_1,R}$, we are able to construct $\bU^*\in r\mathbb{S}_{\rm F}$ such that 
\begin{align*}
    \sign(\langle \bA_i,\bU^*\rangle  + \tau_i) = \sign(\tau_i),\quad \forall i\in [m]
\end{align*}
so long as
\begin{align} \label{lrconstruct}
    \frac{n_1R}{m\sqrt{\log m}} \gtrsim r.
\end{align}
Combining  (\ref{lrembed}) and (\ref{lrconstruct}), we establish a lower bound at the order of the maximal $r$, i.e., 
\begin{align*}
    \bigg(\frac{\bar{r}n_1}{m\sqrt{\log m}}\bigg)^{1/3}.
\end{align*}

 Parallel adaptations, together with the lifting trick,  establish the same lower bound for the canonical 1bCS model. Since this lower bound matches the upper bound in (\ref{approlrupper}), we drop the condition of $n_1\ge n_2$ and conclude the following:

 \medskip

\noindent\textbf{Takeaway.} {\it In the recovery of $\bX\in \sqrt{\bar{r}}\mathbb{B}_{\rm nu}^{n_1,n_2}\cap \mathbb{S}_{\rm F}$ from $\{y_i=\sign(\langle \bA_i,\bX\rangle)\}_{i=1}^m$ with $\{\bA_i\}_{i=1}^m\stackrel{iid}{\sim} N^{n_1\times n_2}(0,1)$, or the recovery of $\bX\in \sqrt{\bar{r}}\mathbb{B}_{\rm nu}^{n_1,n_2}\cap \mathbb{B}_{\rm F}^{n_1,n_2}$ from $\{y_i=\sign(\langle \bA_i,\bX\rangle+\tau_i)\}_{i=1}^m$ with $\{\bA_i\}_{i=1}^m\stackrel{iid}{\sim} N^{n_1\times n_2}(0,1)$ and $\{\tau_i\}_{i=1}^m\stackrel{iid}{\sim}{\rm Unif}[-\lambda,\lambda]$ ($\lambda =  \Theta(1)$), the optimal uniform recovery error rate is $\widetilde{\Theta}\left(\big(\frac{\bar{r}\max\{n_1,n_2\}}{m}\big)^{1/3}\right)$.}

\subsection{Transition to Non-Sparse Case}\label{sec:transition}
Instead of an \emph{extension}, we present here an interesting \emph{refinement} of our lower bounds. First observe that our lower bound $\widetilde{\Omega}((\frac{k}{m})^{1/3})$ makes no sense in a transition to non-sparse case: if $k\to n$, then the lower bound approaches $\widetilde{\Omega}((\frac{n}{m})^{1/3})$, which is too large and contradicts the optimal error rate $\widetilde{\Theta}(\frac{n}{m})$ known for the non-sparse case \cite{long1995sample,acharya2017improved,hsu2024sample}. In fact, in view of $\sqrt{\frac{k}{m\wedge n}}\le r\le 1$ in Theorem \ref{thm:d1bcslower} and $ \sqrt{\frac{k}{(m\wedge n)-1}}\le r\le 1$ in Theorem \ref{thm:1bcslower}, our theorems do not allow for $k\to n$. The main goal here is to get rid of this constraint and establish lower bounds that exhibit a phase transition from $\ell_1$ sparse case to the non-sparse case.

We shall discuss the dithered model with $\bA\sim N^{m\times n}(0,1)$ and $\btau\sim{\rm Unif}[-\lambda,\lambda]^m$ with large enough $\lambda=\Theta(1)$, which satisfies Assumptions \ref{assump1}--\ref{assump2} with $L=1$ and $\rho_{\tau}=\Theta(1)$. As discussed in Section~\ref{sec:technicaloverview}, the first step is to embed $r\mathbb{B}_2^s$ into $\calK_{1,k}$, in which we need $s\le \frac{k}{r^2}$ to guarantee   $\ell_1$ norms bounded by $\sqrt{k}$. Yet, we also (implicitly) need $s\le n$, i.e., the embedding dimension is not higher than the ambient dimension, and need $s\le m$, in order to invoke Lemma \ref{lem:construct}. In the proof of Theorem~\ref{thm:d1bcslower} we used $s=\lfloor \frac{k}{r^2}\rfloor$, the largest $s$ under the first constraint, and then assume $r\ge \sqrt{\frac{k}{m\wedge n}}$ to ensure $\lfloor \frac{k}{r^2}\rfloor\le m\wedge n$; see (\ref{schoice}). As such, to remove the constraint $r\ge \sqrt{\frac{k}{m\wedge n}}$, we can instead set
\begin{align*}
    s = m\wedge n\wedge \Big\lfloor \frac{k}{r^2}\Big\rfloor.
\end{align*}
Combining with the construction condition $\frac{s}{m\sqrt{\log m}}\gtrsim r$ in (\ref{l1dcon2}),
we obtain the lower bound 
\begin{align}\label{completeorder}
    \Omega\bigg( \Big(\frac{k}{m\sqrt{\log m}}\Big)^{1/3}\wedge\frac{1}{\sqrt{\log m}}\wedge \frac{n}{m\sqrt{\log m}}\bigg) = \widetilde{\Omega}\bigg(\Big(\frac{k}{m}\Big)^{1/3}\wedge \frac{n}{m}\bigg)
\end{align}
that holds under $m\gtrsim k$. Intriguingly, this lower bound exhibits a   transition from the $\ell_1$-sparse rate $(k/m)^{1/3}$ to the non-sparse rate $\frac{n}{m}$. This transition is roughly located at $k\asymp \frac{(m\wedge n)^3}{m^2}$. Under $m=\widetilde{\Omega}(k)$, existing results   also imply an upper bound at the same order (up to logarithmic factors)---$\widetilde{O}\Big(\Big(\frac{k}{m}\Big)^{1/3}\wedge \frac{n}{m}\Big)$
achieved by (\ref{hdm1bcs}); see \cite[Remark 6]{chen2024optimal} for instance. Hence, whenever $m=\widetilde{\Omega}(k)$, a complete characterization of the optimal uniform recovery error for D1bCS over $\calK_{1,k}$ is given by (\ref{completeorder}).

Moreover, similar adaptations apply to Theorem~\ref{thm:1bcslower} and yield a fine-grained lower bound identical to (\ref{completeorder}) for canonical 1bCS. Existing works imply an upper bound of the same order (up to logarithmic factors) being achieved by (\ref{hdmd1bcs}) under $m=\widetilde{\Omega}(k)$; see \cite[Remark 6]{chen2024optimal} for instance. Therefore, the takeaway message of the discussion here is the following:

\medskip

\noindent\textbf{Takeaway.} {\it In the recovery of $\bx\in \calK_{1,k}$ from $\by=\sign(\bA\bx+\btau)$ with $\bA\sim N^{m\times n}(0,1)$ and $\btau\sim {\rm Unif}[-\lambda,\lambda]^m$ where $\lambda=\Theta(1)$, and the recovery of $\bx\in \calK_{1,k}^*$ from $\by=\sign(\bA\bx)$ with  $\bA\sim N^{m\times n}(0,1)$, the optimal error rate under $m=\widetilde{\Omega}(k)$ is $\widetilde{\Theta}\left(\min\Big\{\big(\frac{k}{m}\big)^{1/3},\frac{n}{m}\Big\}\right)$.}

\subsection{Multiple Extensions}
While we have discussed a number of extensions separately, we shall point out that  some of these extensions can be pursued simultaneously. For example, by combining the ideas in Sections \ref{sec:lq} and \ref{sec:adbitflip}, one can show that in the recovery of $\ell_q$ ($q\in[0,1]$) sparse vectors in canonical and uniformly dithered  models, the optimal error rate under $\beta m$ adversarial bit flips is at the order of
\begin{align*}
    \widetilde{\Theta}\bigg(\Big(\frac{k}{m}\Big)^{\frac{2-q}{2+q}}+\beta\bigg).
\end{align*}
Similarly, one can combine the developments in Sections \ref{sec:lq} and \ref{sec:transition} to establish the  
fine-grained lower bound 
\begin{align*}
    \widetilde{\Omega}\bigg(\min\bigg\{\Big(\frac{k}{m}\Big)^{\frac{2-q}{2+q}},\frac{n}{m}\bigg\}\bigg) 
\end{align*}
for the recovery of $\ell_q$ sparse vectors, thus matching the upper bound in \cite[Remark 7]{chen2024optimal} up to logarithmic factors. %This lower bound remains valid for sub-Weibull designs, due to the observation in Section \ref{sec:subweibull}. 

\section{Conclusion}\label{sec:conclusion}
In this paper, we established near-optimal lower bounds for the uniform recovery of approximately sparse signals in the canonical 1bCS model and the model with uniform dithering. Our main results show the lower bound $\widetilde{\Omega}((\frac{k}{m})^{1/3})$ for the recovery of $\ell_1$ sparse vectors, which is sharp up to logarithmic factors due to known matching upper bounds. Our argument consists of two steps: the first step is to embed an $\ell_2$ ball in suitable dimension and of small radius into the signal set, which is straightforward for the dithered model but relies on a lifting map for the canonical model; the second step is to construct two points in the small ball that are  indistinguishable, which we accomplished by a general construction lemma. We discussed that our results extend in a few ways, most notably, they can be generalized to $\ell_q$ sparse signals.

There remain some questions that require further investigation. First of all, do our lower bounds remain valid for \emph{arbitrary designs}? Since the lower bound $\Omega(k/m)$ for exactly $k$-sparse signals holds for arbitrary designs, it is tempting to believe that the lower bound $\widetilde{\Omega}((k/m)^{1/3})$ for $\ell_1$ sparse signals also holds much more generally. Yet our current argument cannot yield this general lower bound, since
the construction 
in Lemma \ref{lem:construct} relies on the randomness of $\bA$. To attack this problem one may need to develop fundamentally different arguments. Second, even under Gaussian $\bA$ (and uniform dither), the refinement of logarithmic factor is interesting. In particular, we conjecture that the factor $\sqrt{\log m}$ in our lower bound $\Omega((\frac{k}{m\sqrt{\log m}})^{1/3})$ is simply a proof artifact. Is there a more refined argument that can remove this $\sqrt{\log m}$ factor? Moreover, there are more intricate quantized recovery models, such as dithered multi-bit compressed sensing \cite{jung2021quantized,chen2024optimal} and one-bit phase retrieval \cite{chen2024one}, where the best known upper bounds for  recovering $\ell_1$ sparse signals exhibit similar $m^{-1/3}$ decay rates. Can we establish matching lower bounds for these models? For this problem,  it appears to be promising to build upon the techniques in this paper with suitable adaptations.               

%particularly extending to $\ell_q$ sparsity, sub-Weibull designs, adversarial bit flipping, matrix recovery,

\subsection*{Acknowledgments}
Part of the work was done while JC was visiting Hal{\i}c{\i}o\u{g}lu Data Science Institute at UC San Diego in the summer of 2026. JC thanks Rayan Saab for some stimulating discussions on $\ell_q$ sparse recovery and a potential extension to $q\in(1,2)$. 

  \bibliography{libr}
\bibliographystyle{plain}

\appendix

\section{Proof of Lemma \ref{lem:lqadjust}} \label{app:provelem}
\begin{proof}[Proof of Lemma \ref{lem:lqadjust}]
 For any $q\in(0,1]$ we shall repeatedly use the elementary bound
\begin{align}
    \|\bu\|_q^q
    \le s^{1-\frac q2}\|\bu\|_2^q ,\quad \forall u\in \mathbb{R}^s\label{holders}
\end{align}
which follows from H\"older's inequality.

{\it The case of $q\in(0,1]$.} For the first claim, take any $\bu\in r\mathbb B_2^s$, and since $r\le 1$, by identifying $\bu$ with $(\bu^\top ,\bm{0}^{n-s})^\top $ then $\bu\in \mathbb{B}_2^n$. Moreover, using (\ref{holders}) and $\|\bu\|_2\le r$, 
\begin{align*}
    \|\bu\|_q^q
    \le
    s^{1-\frac q2}\|\bu\|_2^q
    \le
    s^{1-\frac q2} r^q. 
\end{align*}
Now notice that the condition (\ref{d1bcslqcon}) is equivalent to
$
    s^{1-\frac q2} r^q \le k^{1-\frac q2}.
$
Hence $\|\bu\|_q\le k^{\frac{1}{q}-\frac{1}{2}}$, meaning that $\bu\in k^{\frac{1}{q}-\frac{1}{2}}\mathbb{B}_q^n$ by viewing $\bu$ as $(\bu^\top ,\bm{0}^{n-s})^\top $. 
This proves
$r\mathbb B_2^s\subset \calK_{q,k}$.
For the second claim, take any $\bu\in r\mathbb B_2^s$ and write
$\bz=\Phi_s(\bu).$
  By (\ref{holders}), 
\begin{align*}
     \|\bz\|_q^q
    =
    \frac{\sum_{j=1}^s |u_j|^q+1}
    {(\|\bu\|_2^2+1)^{q/2}}
    \le
    \sum_{j=1}^s |u_j|^q+1
    \le
    s^{1-\frac q2}r^q+1. 
\end{align*}
Note that the condition 
is equivalent to 
$
    s^{1-\frac q2} r^q +1
    \le
    k^{1-\frac q2}.$ Therefore we reach
$\|\bz\|_q^q\le k^{1-\frac q2},$
namely 
$
    \bz\in k^{\frac1q-\frac12}\mathbb B_q^n.
$ Since also $\bz\in\mathbb S^{n-1}$ by construction, we have $\bz\in\calK_{q,k}^*$.

{\it The case of $q=0$.} Recall that
$\calK_{0,k}=\Sigma_k^n\cap\mathbb B_2^n.$
For the first claim, any $\bu\in r\mathbb B_2^s$, embedded as
$\tilde \bu=(\bu^\top ,\bm{0}^{n-s})^\top $, has support size at most $s$ and Euclidean norm
at most $r\le 1$. Since the condition (\ref{d1bcslqcon}) reduces to $s\le k$, we have $\tilde{\bu}\in\calK_{0,k}$, thus proving $r\mathbb{B}_2^n\subset \calK_{0,k}$. For the second claim, for any $\bu\in r\mathbb B_2^s$, the lifted vector
$\Phi_s(\bu)$ has support size at most $s+1$, because it may have nonzero
entries only among the first $s$ coordinates and the last coordinate.
It also has Euclidean norm one by construction. Since the condition (\ref{1bcslqcon}) now reduces to $s\le k-1$, we have 
$
    \Phi_s(\bu)\in\Sigma_k^n\cap\mathbb S^{n-1}
    =
    \calK_{0,k}^* .$ 
\end{proof}
\end{document}